\newcommand{\wcapacity}{\textsc{Mwisl}}
\newcommand{\capswcapacity}{\textsc{MWISL}}
\newcommand{\cG}{{\cal G}}
\newcommand{\cE}{{\cal E}}
\newcommand{\cI}{{\cal I}}
\newcommand{\cS}{{\cal S}}
\newcommand{\cP}{{\cal P}}
\newcommand{\cF}{{\cal F}}
\newcommand{\cC}{{\cal C}}
\newcommand{\cGm}{{{\cal G}_{MC-MA}}}
\newcommand{\s}{{\mathfrak l}}
\newcommand{\e}{{\mathfrak e}}
\newcommand{\Ds}{{\Delta}}
\newtheorem{proposition}{Proposition}
\newtheorem{lemma}{Lemma}
\newtheorem{theorem}{Theorem}
\newtheorem{definition}[proposition]{Definition}
\begin{document}

\title{Universal Framework for Wireless Scheduling Problems}

\author{
  Magn\'us M. Halld\'orsson
  \qquad
  Tigran Tonoyan \\ \\
  ICE-TCS,  Reykjavik University \\
  \url{{magnusmh,ttonoyan}@gmail.com}
}

\begin{titlepage}

\maketitle              % typeset the title of the contribution

\begin{abstract}
An overarching issue in resource management of wireless networks is assessing their \emph{capacity}:
\emph{How much communication can be achieved in a network, utilizing all the tools available: power control, scheduling, routing, channel assignment and rate adjustment?}
We propose the first framework for approximation algorithms in the physical model that addresses these questions in full, including  rate control. The approximations obtained are doubly logarithmic in the link length and rate diversity. Where previous bounds are known, this gives an exponential improvement.

A key contribution is showing that the complex interference relationship of the physical model can be simplified into a novel type of amenable conflict graphs, at a small cost. 
We also show that the approximation obtained is provably the best possible for any conflict graph formulation.
\end{abstract}

\thispagestyle{empty}

\end{titlepage}

\section{Introduction}

The effective use of wireless networks revolves around utilizing fully all available diversity.
This can include power control, scheduling, routing, channel assignment and
\emph{transmission rate control} on the links, the latter being an issue of key interest for us.
The long-studied topic of \emph{network capacity} deals with
how much communication can be achieved in a network when its resources are utilized to the fullest.
This can be formalized in different ways, involving a range of problems.
The communication ability of packet networks is characterized by the capacity region, i.e. the set of traffic rates that can be supported by any scheduling policy. In order to achieve \emph{low delays and optimal throughput}, the classic result of Tassiulas and Ephremides \cite{TE92} and followup work in the area (e.g.~\cite{LinShroff04}) point out a core optimization problem that lies at the heart of such questions -- the \emph{maximum weight independent set of links (\wcapacity)} problem: from a given  set of communication links with associated weights/utilities, find an \emph{independent (conflict-free, subject to the interference model) subset of maximum total weight}. This reduction applies to very general settings involving  single-hop and multi-hop, as well as fixed and controlled transmission rate networks. Moreover, approximating {\wcapacity} within any factor implies achieving the corresponding fraction of the capacity region. This makes {\wcapacity} a central  problem in the area.
Unfortunately, solving this problem in its full generality is notoriously hard, since it is well known that {\wcapacity} is effectively inapproximable (under standard complexity theory) e.g. in models described by general conflict relations or general graphs. Moreover, in general, even approximating the capacity region  in polynomial time within  a non-trivial bound, while keeping the delays in reasonable bounds,  is hard under standard assumptions~\cite{ShahTT11}.

We tackle this question in the \emph{physical model} of communication. Towards this end, we develop a general approximation framework that not only helps us to approximate {\wcapacity}, but can also be used for tackling various other scheduling problems, such as TDMA scheduling, joint routing and scheduling and others.
The problems
handled can additionally involve path or flow selection, multiple channels and radios, and packet scheduling.
We obtain \emph{double-logarithmic} (in link and rate diversity) approximation for these problems, exponentially  improving the previously known logarithmic approximations, and, importantly, extending them to incorporate \emph{different fixed rates and rate control}.
The crucial feature of the approach (which makes it so general) is that it involves transforming the complex physical model into an unweighted/undirected conflict graph and solving the problems simply on these graphs.
Perhaps surprisingly, we find that our schema attains the best possible performance of \emph{any} conflict graph representation.
Numerical simulations  show that the conflict graph framework is a good approximation for the physical model on randomly placed network instances as well.
Our approach also finesses the task of selecting optimum power settings by using \emph{oblivious} power assignment, one that depends only on the properties of the link itself and not on other links. The performance bounds are however in comparison with the optimum solution that can use arbitrary power settings.

Technically, our approach  generalizes our earlier framework \cite{us:stoc15}.
Our extensions required substantial changes throughout the whole body of arguments.
That formulation works only for uniform constant rates, and the generalization requires substantial new ideas.
One indicator of the challenges overcome is that we could prove that our
doubly-logarithmic approximation is best possible in the presence of different rates,
while better approximations are known to hold in the case of uniform rates \cite{us:stoc15}.

We make some undemanding assumptions about the settings. We assume that the networks are
\emph{interference-constrained}, in that interference rather than the ambient noise is the determining factor of proper
reception.  This assumption is common and is particularly natural in settings with rate control, since the impact of noise can always be made
negligible by avoiding the highest rates, losing only a small factor in performance.
We also assume that nodes are (arbitrarily) located in a doubling metric, which generalizes Euclidean space, allowing the modeling of some of  non-geometric effects seen in practice.

\subparagraph*{Our Results}
Our results can be summarized as follows:
\begin{itemize}
\item{We establish a general framework for tackling wireless scheduling and related problems,}
\item{Our approximations hold for nearly all such problems, including variable rates settings,}
\item{We obtain exponential improvement over previously known approximations,}
\item{The approximations are obtained via simple conflict graphs, as opposed to the complicated physical model, and by using \emph{oblivious} power assignments,}
\item{We establish tight bounds indicating the limitations of our method.}
\end{itemize}
\subparagraph*{Related work}
 Gupta and Kumar  introduced the  physical model of interference/communication with log-path fading in their influential paper~\cite{kumar00},
 and it has remained the default in analytic studies.
Moscibroda and Wattenhofer \cite{moscibrodaconnectivity} initiated worst-case analysis of scheduling problems in networks of arbitrary topology, which is also the setting of interest in this paper.
There has been significant progress in understanding scheduling problems with fixed uniform rates.
NP-completeness results have been given for different variants \cite{goussevskaiacomplexity, katz2010energy,lin2012complexity}.
Early work on approximation algorithms involve (directly or indirectly) partitioning links into length groups,
which results in performance guarantees that are at least logarithmic in
$\Delta$, the link length diversity: TDMA scheduling and uniform weights {\wcapacity} \cite{goussevskaiacomplexity,dinitz,us:talg12},
non-preemptive scheduling \cite{fu2009power}, joint power control, scheduling and routing \cite{chafekarcrosslayer},
and joint power control, routing and throughput scheduling in multiple channels \cite{AG10}, to name a few.
Constant-factor approximations are known for uniform weight {\wcapacity} (in restricted power regimes \cite{SODA11}
and (general) power control \cite{kesselheimconstantfactor}).
Standard approaches translate the constant-factor approximations for the uniform weight {\wcapacity}  into $O(\log n)$ approximations for TDMA scheduling  and general {\wcapacity}.
Many problems become easier, including {\wcapacity} and TDMA scheduling, in the regime of linear power assignments
\cite{fangkeslinear,wang2011constant,halmitcognitive,tonoyanlinear}.
% Us, framework
Recently, a $O(\log^* \Ds)$-approximation algorithm was given for TDMA scheduling and {\wcapacity} \cite{us:stoc15}, by transforming the physical model into a conflict graph. We build on this approach, and extend it into a general framework that covers other problems and incorporates support for rate control.

Very few results are known for problems involving rate control.
The constant-factor approximation for  {\wcapacity} with uniform weights and arbitrary but fixed rates proposed by Kesselheim~\cite{KesselheimESA12} can be used to obtain  $O(\log n)$-approximations for TDMA scheduling and {\wcapacity} with rate control, where $n$ is the number of links.
Another recent work~\cite{goussevskaia2016wireless} handles the TDMA scheduling problem (with fixed but different rates), obtaining an approximation independent of the number of links $n$, but the ratio is polynomial in $\Ds$.
There have been numerous algorithms that try to approximate or replace {\wcapacity} in the context of packet scheduling. Several examples include Longest-Queue-First Scheduling (LQF)~\cite{JooLS09},  Maximal Scheduling~\cite{WuSP07}, Carrier Sense Multiple Access (CSMA)~\cite{jiang2010distributed}. The approximations obtained usually depend on some parameter of the conflict graph, such as the \emph{interference degree}. In the case of CSMA (and  other similar protocols), it is known that the algorithms are throughput-optimal, but in general they take exponential time to stabilize, or otherwise require constant degree conflict graphs~\cite{JiangLNSW12}.
It is also well known that many scheduling problems such as vertex coloring and {\wcapacity} are easy to approximate in bounded \emph{inductive independence} graphs, such as geometric intersection graphs or protocol model. However, fidelity to the cumulative nature of interference  and the question of modeling rate control are among the significant issues faced by such graph models.

\subparagraph*{Paper Organization} The fundamental ideas of our approximation framework are described in Sec.~\ref{S:approxmethod}. After introducing the model and definition in Sec.~\ref{S:model}, we derive
the core technical part, the approximation of the physical model by the conflict graphs,
 in Sec.~\ref{S:approximation}, and the optimality of approximation. The framework is applied to obtain approximations for  fixed  rate scheduling problems in Sec.~\ref{S:fixedrate} and for  problems with rate control in Sec.~\ref{S:ratecontrol} (the latter two can be read separately from Sec.~\ref{S:approximation}).
Due to space constraints, several technical proofs  are deferred to the appendix.

\section{Approximation Method}\label{S:approxmethod}

Before defining the details, let us describe the main idea behind the approximation technique. In essence, we define a notion of approximation of an independence system\footnote{An independence system $\cI$ over a set of vertices $V$  is a pair $\cI=(V,\cE)$, where $\cE\subseteq 2^V$ is a collection of subsets of vertices that is closed under subsetting: if $S\in \cE$ and $S' \subset S$, then $S' \in \cE$.} $\cI_\cP = (L, \cE_\cP)$  by a graph  $\cG  = (L, E)$ over the set $L$ of links. The system  $\cI_\cP$ corresponds to the cumulative interference in the physical model, while $\cG$ is a conflict graph  describing pairwise conflicts between links. We will refer to independent sets in $\cI_\cP$ as \emph{feasible sets}, and to independent sets in $\cG$ as \emph{independent sets}, to avoid confusion.

The approximation is described by several key properties.

\textbf{Refinement (Feasibility of Independent Sets).} Every independent set $S$ in $\cG$ must be feasible, i.e. $S \in \cE_\cP$. Thus, finding an independent set in $\cG$ gives also a feasible set in $\cI_\cP$.

\textbf{Tightness (of refinement).} There is a small number $k$ such that every feasible set $S\in \cI_\cP$ is a union of at most $k$ independent sets in $\cG$. The smallest such $k$ is called the \emph{tightness} of refinement. This property guarantees that even an optimal (for a problem in question) feasible set can be covered with a few independent sets.

The two properties above establish a tight connection between the two models. That allows us to take nearly \emph{every} scheduling problem in the physical model and reduce it to the corresponding problem in conflict graphs (in a way formalized in Sec.~\ref{S:framework}), by paying only an approximation factor depending on the tightness $k$. However, in order for this scheme to work, it should be easier to solve such problems in $\cG$, which leads to the third key property.

\textbf{Computability.} There are efficient (approximation) algorithms for scheduling-related problems such as \emph{vertex coloring} and \emph{maximum weight independent set} in $\cG$.

A graph $\cG$ satisfying the properties above is said to be a \emph{refinement} of $\cI_{\cP}$.
The main effort in the following two sections is to define an appropriate conflict graph refinement for the physical model and prove these key properties. We find such a  family that approximates the physical model with nearly constant tightness, i.e. double-logarithmic in length and rate diversity and show that this is best possible for any conflict graph, up to constant factors.
This approximation allows us to bring to bear the large body of theory of graph algorithms, greatly simplifying both the exposition and the analysis.

\section{Model}
\label{S:model}

In scheduling problems, the basic object of consideration is a set $L$ of $n$ communication links, numbered from $1$ to $n$, where each link  $i\in L$ represents a single-hop communication request between two wireless nodes located in a metric space -- a sender node $s_i$ and  receiver node $r_i$.

We assume the nodes are located in a metric space with distance function $d$\label{G:distance}.
We denote $d_{ij}=d(s_i,r_j)$\label{G:asymdistance} and $l_i=d(s_i,r_i)$\label{G:li}.
The latter is called the \emph{length} of link $i$. Let $d(i,j)$ denote the minimum distance between the nodes of links $i$ and $j$.

The nodes have adjustable transmission power levels.
A \emph{power assignment} for the set $L$ is a function $P:L\rightarrow \mathbb{R}_+$. For each link $i$, $P(i)$\label{G:power} defines the power level used by the sender node $s_i$.
In the physical model of communication, when using a power assignment $P$, a transmission of a link $i$ is successful if and only if
\begin{equation}\label{E:sinr}
SIR(S,i)=\frac{P(i)/l_i^{\alpha}}{\sum_{j\in S\setminus \{i\}} P(j)/d_{ji}^{\alpha}}\ge \beta_i,
\end{equation}
where  $\beta_i>1$\label{G:beta} denotes the minimum signal to noise ratio required for link $i$, $\alpha \in (2,6)$\label{G:alpha} is the path loss exponent and $S$ is the set of links transmitting concurrently with link $i$.
Note that we omit the noise term in the formula above, since we focus on interference limited networks. This can be justified by the fact that one can simply slightly decrease the transmission rates to make the effect of the noise negligible, then restore the rates by paying only constant factors in  approximation.

A set $S$ of links is called $P$-\emph{feasible} if the condition~(\ref{E:sinr}) holds for each link $i\in S$ when using power assignment $P$. We say $S$ is \emph{feasible} if there exists a power assignment $P$ for which $S$ is $P$-feasible.

\textbf{Effective Length.}
Let us denote $\s_i=\beta_i^{1/\alpha} l_i$  and call it the \emph{effective length} of link $i$. Let $\Ds(L)=\max_{i,j\in L}\{\s_i/\s_j\}$ denote the \emph{(effective) length diversity} of  instance $L$.
We call a set $S$ of links \emph{equilength} if for every two links $i,j\in S$, $\s_i \le 2\s_j$, i.e., $\Ds(S) \le 2$.
Note that with the introduction of effective length, the feasibility constraint~(\ref{E:sinr}) becomes: $\frac{P(i)}{\s_i^\alpha} \ge \sum_{j\in S\setminus \{i\}}\frac{P(j)}{d_{ji}^{\alpha}}$. This looks like the same formula but with uniform rates $\beta_i'=1$. However, there is an essential difference between the two: the quantities $\s_i$ are not related to the metric space in the same way as   lengths $l_i$, as $\s_i$ can be arbitrarily larger than $l_i$.

\textbf{Metrics.} The \emph{doubling dimension} of a metric space is the infimum of all numbers $\delta > 0$
such that for every $\epsilon$, $0 < \epsilon \le 1$, every ball of radius $r>0$ has at most $C\epsilon^{-\delta}$
points of mutual distance at least $\epsilon r$ where $C\geq 1$ is an absolute constant.
For example, the $m$-dimensional Euclidean space
has doubling dimension $m$~\cite{heinonen}.  We let $m$\label{G:dimension} denote the doubling dimension of the space
containing the links. We will assume $\alpha>m$,  which is the standard assumption $\alpha >2$ in the Euclidean plane.

\section{Conflict Graph Approximation of Physical Model}
\label{S:approximation}

In this section we present the $O(\log\log\Ds)$-tight refinement of the physical model by conflict graphs. The first part introduces our conflict graph $\cG_f$ that generalizes  the conflict graph definition of~\cite{us:stoc15} and  extends it to general thresholds/rates. The three subsequent parts give the proofs of the three key properties: refinement, tightness and computability. The last part argues the asymptotic optimality of $O(\log\log\Ds)$-tightness for any conflict graph, which contrasts the $O(\log^*\Ds)$ bound known in the uniform thresholds setting.
\begin{theorem}
There is an $O(\log\log\Ds)$-tight refinement of the physical model by a conflict graph family $\cG(L)$.
\end{theorem}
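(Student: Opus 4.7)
\medskip

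The plan is to specify the conflict graph family, then prove the three key properties (refinement, tightness, computability) one at a time. The definition of $\cG_f(L)$ should be guided by a single natural idea: two links $i,j$ are adjacent if, relative to some oblivious power assignment (most naturally, the ``mean'' assignment $P(i)=\s_i^{\alpha/2}$), the pairwise SIR contribution from one to the other exceeds some threshold controlled by a slowly growing function $f$ of $\Ds$. Concretely, I would pick $f=\Theta(\log\log\Ds)$ and define edges so that $i\sim j$ whenever the affectance $a_P(i,j)+a_P(j,i)$ is at least $1/f$, where $a_P(j,i)=P(j)d_{ji}^{-\alpha}\cdot\s_i^\alpha/P(i)$. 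This makes the graph symmetric, dependent only on $L$ and local metric data, and parameterized by a single knob.

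For \textbf{refinement}, I would show that if $S$ is independent in $\cG_f$, then $S$ is $P$-feasible. Fix a receiver $r_i$ and partition $S\setminus\{i\}$ into groups $S_k$ according to the ratio $\s_j/\s_i\in[2^k,2^{k+1})$. Within each length class, independence in $\cG_f$ means every pair of links is ``well separated'' in effective distance. Using the doubling property, count how many links can fit in each annular shell at distance $\approx 2^t\s_i$ (giving $O(2^{tm})$ per shell because pairwise distances are bounded below), and use $\alpha>m$ to bound the geometric sum of interference over $t$ by a constant, losing a factor proportional to $1/f$ from the threshold. Summing the constant contributions over length classes, with some care for classes where $\s_j\ll\s_i$ or $\s_j\gg\s_i$ (where the analysis is slightly asymmetric and requires invoking the non-edge condition on the opposite direction of affectance), gives a total interference bounded below $1$ as long as the constants are tuned.

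For \textbf{tightness}, the goal is to partition any feasible $S$ into $O(\log\log\Ds)$ independent subsets of $\cG_f$. I would first reduce to the equilength case by splitting $S$ into $\Theta(\log\Ds)$ equilength groups $S^{(q)}$; the challenge is then to combine these groups into only $O(\log\log\Ds)$ classes. The natural idea is hierarchical: merge groups whose indices lie in a common ``macro-class'' defined by a doubly-exponential ladder $\{2^{2^k}\}$, so that any two links in the same macro-class have length ratios either at most $2$ or at least a very large constant. Within such a merged class, links at widely different length scales interact weakly in both directions, and feasibility of $S$ already provides a strong quantitative bound (via $P$-signal-strength inequalities) on how many pairs can violate the $\cG_f$-threshold at each link. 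Finally one needs to two-color or constant-color each merged class into independent sets, which again uses feasibility plus a geometric ball-packing argument. This hierarchical step is where the main obstacle lies: carefully choosing the ladder and showing that the merged-class conflict graph has constant chromatic number without losing more than a constant factor per level.

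For \textbf{computability}, I would argue that $\cG_f$ inherits an inductive-independence bound from the doubling metric and the threshold structure: in any vertex-ordering by decreasing effective length, each link has only a constant number of $\cG_f$-neighbors among longer links that form an independent set, because such neighbors must be packed into a bounded-radius region around the shorter link. This gives constant-factor greedy approximations for \mwisl\ on $\cG_f$ and logarithmic bounds for vertex coloring, which is enough downstream. Combining the refinement bound, the $O(\log\log\Ds)$-tightness, and the computability yields the theorem; the main technical work is concentrated in the tightness step, and specifically in the doubly-exponential merging argument.
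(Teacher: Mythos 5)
Your plan diverges from the paper at the very first step, and that divergence is fatal to the refinement property. You define adjacency by a threshold $1/f$ with $f=\Theta(\log\log\Ds)$ on pairwise affectance under mean power, i.e.\ non-adjacency only forces $d_{ij},d_{ji}\gtrsim f^{1/\alpha}\sqrt{\s_i\s_j}$ --- a separation that is essentially a constant multiple of the geometric mean of the effective lengths, independent of the ratio $\s_{max}/\s_{min}$. In the paper the separation must \emph{grow polynomially in the pairwise ratio}: $d_{ij}d_{ji}>\gamma\,\s_i\s_j(\s_{max}/\s_{min})^{\delta}$ with $\delta>\delta_0=\frac{\alpha-m+1}{2(\alpha-m)+1}>1/2$, and the power exponent $\tau$ is then chosen in a nonempty interval depending on $\delta,\alpha,m$ (mean power $\tau=1/2$ is in general \emph{not} admissible). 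This is not a cosmetic difference. First, your own summation does not close: each of the $\Theta(\log\Ds)$ equilength classes can contribute $\Theta(1/f)$, so the total is $\Theta(\log\Ds/\log\log\Ds)$, not $O(1)$; the geometric decay across classes that the paper gets (Lemmas \ref{P:mainlemma1} and \ref{P:mainlemma2}) comes precisely from the exponent $\delta$ and the tuned $\tau$, and with a ratio-independent threshold the required inequalities on $\tau$ are contradictory. Second, your graph is not a refinement under \emph{any} power assignment: take one unit-length link $i$ with huge $\beta_i$ (so $\s_i$ is large) and about $(\s_i/\s_0)^{m/2}$ short links of effective length $\s_0\ll\s_i$ placed at distance $\approx 2f^{1/\alpha}\sqrt{\s_i\s_0}$ from $r_i$, pairwise $\approx f^{1/\alpha}\s_0$ apart. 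All pairs are non-adjacent in your graph, yet combining the SIR constraint of $i$ with those of the short links forces $(d^2/(\s_i\s_0))^{\alpha}\gtrsim N$, i.e.\ $N\lesssim f^2$, while $N$ can be polynomial in $\Ds$. So independent sets of your graph can be infeasible, which is exactly why the paper's adjacency predicate must depend on the pairwise ratio and why its optimality section parametrizes conflict graphs that way.

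The tightness plan also has a genuine hole. Merging equilength groups along a doubly-exponential ladder and then claiming each merged class is constant-colorable cannot work: the paper's own lower-bound construction is a \emph{feasible} set of $\Theta(\log\log\Ds)$ links that is pairwise $f$-adjacent, with effective lengths growing doubly exponentially; such a clique fits inside a single one of your merged classes, whose chromatic number is then $\Theta(\log\log\Ds)$, not $O(1)$. The paper's route is different: first split a feasible set into $O(1)$ many $\rho$-independent subsets using the feasibility (signal-strength) inequalities, then color greedily in non-increasing effective-length order and bound the number of longer $f$-adjacent links seen by any link $i$ via the key geometric fact (Lemma \ref{P:triangles}) that two links both $f$-adjacent to a shorter link $i$ satisfy $d_{jk}d_{kj}\lesssim\s_i\s_k f(\s_k/\s_i)+O(\s_j\s_k)$; combined with $\rho$-independence this yields the chain $\lambda_1<f(\lambda_2)\le f(f(\lambda_3))\le\cdots$, hence $O(f^*(\Ds))=O(\log\log\Ds)$ colors. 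That iterated-function argument, together with Lemma \ref{P:simpleset} (which also underlies the constant inductive independence in Theorem \ref{T:inductiveindep}, rather than the plain ball-packing you invoke), is the missing core of the proof; and note that the framework needs a constant-factor coloring guarantee from inductive independence, not the logarithmic one you settle for.
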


\subparagraph*{Conflict Graphs} We define the conflict graph family as follows.
\begin{definition}
Let $f:\mathbb{R}_+\rightarrow \mathbb{R}_+$ be a positive non-decreasing function.
Links $i,j$ are said to be \emph{$f$-independent} if
  $ d_{ij}d_{ji} > \s_i\s_j f\left(\s_{max}/\s_{min}\right), $
where $\s_{min}=\min\{\s_i,\s_j\},\s_{max}=\max\{\s_i,\s_j\}$, and otherwise \emph{$f$-adjacent}.
A set of links is $f$-independent ($f$-adjacent) if they are pairwise $f$-independent ($f$-adjacent).

The conflict graph $\cG_f(L)$ of a set $L$ of links is the graph with vertex set $L$, where two vertices
are adjacent if and only if they are $f$-adjacent.
\end{definition}
This definition extends the conflict graphs introduced in~\cite{us:stoc15}, where the independence criterion was $d(i,j)/l_{min}>f(l_{max}/l_{min})$ ($l_{max},l_{min}$ are the length of the longer and shorter links, resp.). When all threshold values $\beta_i$ are constant, the latter essentially follows from the definition above by ``canceling'' $l_{max}$  with the larger value of $d_{ij},d_{ji}$ (modulo constant factors). In general, however, the effective lengths can be very different from the actual link lengths, and feasibility requires more separation than given by  graphs involving distances only. A technical difficulty introduced by the new definition is that we have to keep track of two distances $d_{ij}$ and $d_{ji}$ instead of the single distance $d(i,j)$, but this appears to be necessary.

We will be particularly interested in \emph{sub-linear} functions $f(x) = O(x)$.
A function $f$ is \emph{strongly sub-linear} if for each constant $c\ge 1$, there is a constant $c'$ such that $cf(x)/x\le f(y)/y$ for all $x,y\ge 1$ with $x\ge c'y$. Note that if $f$ is strongly sub-linear then $f(x)=o(x)$.
For example, the functions $f(x)=x^{\delta}$ ($\delta<1$) and $f(x)=\log{x}$ are strongly sub-linear.

\subparagraph*{Refinement: Feasibility of Independent Sets}
Our goal now is to find a function $f$ such that each independent set in $\cG_f$ is feasible. It is clear that this can be achieved by letting $f$ grow sufficiently fast. But  we should not let it grow too fast, so as to not affect tightness.
We also need to indicate which power assignment makes the independent sets in $\cG_f$ feasible.
Our approach is to  preselect a family of \emph{oblivious power assignments}, that are local to each link and do not depend on others, and then find an appropriate function $f$.
Consider the family of power assignments $P_\tau$ parameterized by $\tau\in (0,1)$, where $P_\tau(i)\sim \s_i^{\tau\alpha}$ for each link $i$.
In order to obtain $P_\tau$-feasibility, we take conflict graphs $\cG_f$ with $f(x)=\gamma x^\delta$ for $\delta\in (0,1)$ and $\gamma\ge 1$. Such graphs are denoted as $\cG_{\gamma}^\delta$.
We show that every independent set in $\cG_\gamma^\delta$ for appropriate $\gamma$ and $\delta$ is $P_\tau$-feasible for some $\tau$.

\begin{theorem}\label{T:obliviouspowers}
Let $\delta_0=\frac{\alpha-m+1}{2(\alpha-m) + 1}$. If $\delta\in (\delta_0,1)$ and the constant $\gamma>1$ is large enough, there is a value $\tau \in (0,1)$ such that each independent set in $\cG_{\gamma}^{\delta}$ is $P_{\tau}$-feasible.
\end{theorem}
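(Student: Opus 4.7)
The plan is to show that for any independent set $S$ in $\cG_\gamma^\delta$ and any $i\in S$, the total interference at $i$ under $P_\tau$ does not exceed the useful signal, i.e.\ after the effective-length normalization,
$$\sum_{j\in S\setminus\{i\}} \frac{\s_j^{\tau\alpha}}{d_{ji}^\alpha} \;\le\; \s_i^{(\tau-1)\alpha}.$$
I would prove this by bucketing the links of $S\setminus\{i\}$ into effective-length classes $C_k=\{j:\s_j\in[2^k,2^{k+1})\}$ and bounding the contribution of each class. Two ingredients drive the argument: pairwise $f$-independence \emph{within} a class (to count links via the doubling dimension), and $f$-independence \emph{between $i$ and $j$} (to lower-bound $d_{ji}$).

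For the first ingredient, two links $j,j'\in C_k$ satisfy $d_{jj'}d_{j'j}>\gamma\s_j\s_{j'}$, hence $\max(d_{jj'},d_{j'j})\ge\sqrt\gamma\cdot 2^k$. Combining with the triangle inequality $d_{jj'}\le d(s_j,s_{j'})+\s_{j'}$ (and symmetrically) gives mutual sender separation $\Omega(\sqrt\gamma\cdot 2^k)$ once $\gamma$ is a sufficiently large constant. The doubling-dimension bound then yields
$$|\{j\in C_k:\;s_j\in B(r_i,R)\}|\;\le\; C\left(\frac{R}{\sqrt\gamma\,2^k}\right)^{m}+O(1).$$
For the second ingredient, the independence condition $d_{ij}d_{ji}>\gamma\,\s_i\s_j(\s_{\max}/\s_{\min})^\delta$ must be converted into a lower bound on the single distance $d_{ji}$. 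I would split into two regimes: if $d_{ji}\ge 2\max(\s_i,\s_j)$, then $d_{ij}\le d_{ji}+2\max(\s_i,\s_j)\le 2d_{ji}$ and so $d_{ji}\gtrsim\sqrt{\gamma\,\s_i\s_j}\,(\s_{\max}/\s_{\min})^{\delta/2}$; otherwise $d_{ij}$ is at most a constant times $\s_{\max}$, and the product inequality yields $d_{ji}\gtrsim \gamma\,\s_{\min}(\s_{\max}/\s_{\min})^{\delta}$. This produces a two-branch lower bound $R_k^{\min}$ on $d_{ji}$ for $j\in C_k$.

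Third, I would sum interference from $C_k$ by partitioning over dyadic annuli $d_{ji}\in[R,2R]$ starting from $R=R_k^{\min}$. Each annulus contributes at most $\left((R/(\sqrt\gamma 2^k))^m+O(1)\right)\cdot 2^{k\tau\alpha}/R^\alpha$, and since $\alpha>m$ the geometric series in $R$ converges and is dominated by the first annulus. This leaves a single per-class bound depending on $2^k$, $R_k^{\min}$, $\gamma$, and the chosen exponent $\tau$. Summing over $k$ produces two geometric series, one over short links ($k\le\log\s_i$) and one over long links ($k\ge\log\s_i$). For both series to converge and jointly be bounded by $\s_i^{(\tau-1)\alpha}$, the exponent $\tau$ must lie strictly between two thresholds that are affine in $\delta$; chasing the exponents of $\s_i$ and $\s_j$ through steps two and three, the admissible interval for $\tau$ turns out to be non-empty exactly when $\delta>\delta_0=(\alpha-m+1)/(2(\alpha-m)+1)$. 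Any such $\tau$ lies in $(0,1)$ since $m<\alpha$, and a sufficiently large constant $\gamma$ absorbs all hidden multiplicative constants so that the total is at most $\s_i^{(\tau-1)\alpha}$.

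The main obstacle is the second step: because the independence condition binds only the \emph{product} $d_{ij}d_{ji}$, one cannot obtain the needed lower bound on $d_{ji}$ without the case split, and the two branches of that split propagate differently through the exponent bookkeeping of the final summation. Pinning down the precise threshold $\delta_0=(\alpha-m+1)/(2(\alpha-m)+1)$ as the value for which the short-link lower bound on $\tau$ and the long-link upper bound on $\tau$ just meet is the delicate calculation that ties the whole argument together; everything else (the in-class separation bound, the annular geometric sum, and the choice of $\gamma$) is fairly standard once the bound $R_k^{\min}$ is in hand.
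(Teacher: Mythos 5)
Your overall architecture is the same as the paper's: oblivious powers $P_\tau$, a split of the independent set into equilength (effective-length) classes, a doubling-dimension packing bound per class, a geometric sum over distances and over classes, and finally an interval of admissible $\tau$ that must be nonempty, with $\delta_0$ emerging as the threshold. The genuine gap is exactly the step you defer as ``the delicate calculation'': you never carry out the exponent bookkeeping, and with the two-branch lower bound on $d_{ji}$ that you state, it does not deliver the claimed threshold $\delta_0$. Concretely, in your second branch ($d_{ji}<2\s_{\max}$) you only obtain $d_{ji}=\Omega\bigl(\gamma\,\s_{\min}(\s_{\max}/\s_{\min})^{\delta}\bigr)$. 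When $i$ is the long link (say $l_i=\Theta(\s_i)$, i.e.\ $\beta_i=O(1)$) and a class $C_k$ consists of much shorter links whose senders sit near $r_i$, as many as $\Theta\bigl((\s_i/(\sqrt{\gamma}\,2^k))^m\bigr)$ class members can lie in this branch while remaining pairwise $f$-independent and $f$-independent of $i$; pushing their contribution through your annular sum forces $\tau\ge 1-\delta(\alpha-m)/\alpha$, strictly stronger than the lower bound $1-\tfrac{(1+\delta)(\alpha-m)}{2\alpha}$ that the branch-1 regime gives. Symmetrically, a single long branch-2 interferer $j$ whose sender is at distance about $\gamma\,\s_j^{\delta}\s_i^{1-\delta}$ from a short receiver $r_i$ (with $r_j$ on the far side, so the pair is $f$-independent) contributes $\gamma^{-\alpha}(\s_j/\s_i)^{(\tau-\delta)\alpha}$, forcing $\tau\le\delta$. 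These two requirements are compatible only when $\delta\ge \alpha/(2\alpha-m)$, which for $m>1$ strictly exceeds $\delta_0=\tfrac{\alpha-m+1}{2(\alpha-m)+1}$ (e.g.\ $3/4$ versus $2/3$ for $\alpha=3$, $m=2$). So the closing claim that the admissible interval is nonempty exactly when $\delta>\delta_0$ is unsupported by your own bounds.

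For comparison, the paper's proof never introduces your branch 2: in its two main lemmas (interferers all longer than $i$, respectively all shorter) it converts the product condition directly into a lower bound of the symmetric square-root form, $d(i,j)>\sqrt{\gamma}\,\s_{\max}^{(1+\delta)/2}\s_{\min}^{(1-\delta)/2}$, on the minimum link distance, and feeds that into a single packing lemma; your case split is precisely the place where that conversion is nontrivial (it is immediate only when the physical link lengths are negligible compared with $\sqrt{d_{ij}d_{ji}}$, i.e.\ in your branch 1). Hence, as written, your proposal either must justify that the branch-2 regime can be discarded or absorbed (your own estimates indicate it cannot, for this conflict-graph definition, without changing the admissible range of $\delta$ and $\tau$), or must redo the final bookkeeping with both branches and accept a different threshold; in its current form it does not prove the theorem with the stated $\delta_0$.
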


The proof is an adaptation of the ideas used in the proof of~\cite[Cor. 6]{us:fsttcs15} to our definition of conflict graphs and effective lengths. Given an independent set $S$ in $\cG_{\gamma}^{\delta}$ and a link $i$, we bound the interference of $S$ on $i$ by first splitting $S$ into equilength subsets, bounding the contribution of each subset separately, then combining the bounds into one. The core of the proof is a careful application of a common packing argument in doubling metric spaces.

\subparagraph*{Tightness of Refinement}
Now, let us bound the number of $f$-independent sets that are necessary to cover a feasible set. We show that this number is $O(f^*(\Ds(S)))$ for any feasible set $S$, where $f^*$ is defined for every strongly sub-linear function, as follows.
For each integer $c\geq 1$, the function $f^{(c)}(x)$ is defined inductively by: $f^{(1)}(x)=f(x)$ and $f^{(c)}(x)=f(f^{(c-1)}(x))$\label{G:frepeated}. Let $x_0=\inf\{x\geq 1, f(x) < x\} +1$; such a point exists for every $f(x)=o(x)$. The function $f^*(x)$\label{G:fstar}, is defined by:
$
f^*(x)=\arg\min_c\{f^{(c)}(x)\le x_0\}
$ for arguments $x> x_0$, and $f^*(x)=1$ for the rest.
Note that for a function $f(x)=\gamma x^\delta$ with constants $\gamma>0$ and $\delta\in (0,1)$, $f^*(\Ds)=\Theta(\log{\log{\Ds}})$, which is the tightness bound we are aiming for.

\begin{theorem}\label{T:sandwich}
Consider a non-decreasing strongly sub-linear function $f$. Every feasible set $S$ can be split into $O(f^*(\Ds(S)))$ subsets, each independent in $\cG_{f}(S)$.
\end{theorem}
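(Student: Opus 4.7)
The plan is to induct on the effective length diversity $\Ds(S)$, following the iterative structure of $f^*$, via the recursion $T(\Ds) \le T(f(\Ds)) + O(1)$, where $T(\Ds)$ denotes the number of $f$-independent classes needed to cover any feasible $S$ with $\Ds(S)\le \Ds$. Unrolling this recursion yields $T(\Ds) = O(f^*(\Ds))$, with base case $\Ds(S)\le x_0$.

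For the base case (essentially equilength feasible sets) I would establish $O(1)$-colorability in $\cG_f(S)$ as follows. Bounding the sum in~(\ref{E:sinr}) by its $j$-th term, doing the same symmetrically at $j$ for $i$, and multiplying the resulting pairwise inequalities yields $d_{ij}d_{ji}\ge \s_i\s_j$ for any feasible pair. For an equilength pair, $f$-adjacency demands $d_{ij}d_{ji}\le \s_i\s_j f(O(1)) = O(\s_i\s_j)$, so feasible $f$-adjacent pairs all lie in a narrow geometric band around $d_{ij}d_{ji}=\s_i\s_j$. A doubling-metric packing argument, in the spirit of the one underpinning Theorem~\ref{T:obliviouspowers}, then bounds the $f$-degree of each link by $O(1)$, giving the base-case coloring.

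For the inductive step on $\Ds > x_0$, I would split $S$ by a length threshold into a "shorter" subset $S_{\le}$ of diversity at most $f(\Ds)$ and a "longer" complement $S_>$; apply the inductive hypothesis to $S_{\le}$ to color it with $T(f(\Ds))$ classes; then absorb $S_>$ together with all cross-adjacencies into $O(1)$ further classes. The main obstacle lies precisely in this absorption step. Pairwise feasibility alone yields only $d_{ij}d_{ji}\ge \s_i\s_j$, which is too weak for $f$-independence when $\s_j/\s_i$ is large, since the threshold grows as $f(\s_j/\s_i)$. Overcoming this requires invoking the full summation form of feasibility at the longer endpoint (which caps, per link, the number of neighbors that are "close" on the power-weighted affectance scale), combined with the doubling-metric packing bound to convert that cap into a bound on the $f$-degree across length scales. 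The strongly sub-linear growth of $f$ is exactly what guarantees that these two ingredients yield only $O(1)$ extra classes per induction step, so the $f^*(\Ds)$ iterations bound the total color count.
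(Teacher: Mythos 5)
Your inductive decomposition is the main problem. The recursion $T(\Ds)\le T(f(\Ds))+O(1)$ does not correspond to your split: if $S_{\le}$ has diversity at most $f(\Ds)$, then the complement $S_>$ still has diversity up to $\Ds/f(\Ds)$ and is itself an arbitrary feasible set, so the subgraph of $\cG_f$ it induces can require $\Theta\bigl(f^*(\Ds/f(\Ds))\bigr)$ classes (the paper's optimality construction shows such blow-ups are unavoidable for conflict graphs of this form), not $O(1)$. What your split actually gives is $T(\Ds)\le T(f(\Ds))+T(\Ds/f(\Ds))$, which for $f(x)=\gamma x^{\delta}$ unrolls to a $\log\Ds$-type bound rather than $\log\log\Ds$. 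The summation-form-of-feasibility-plus-packing idea cannot rescue the absorption step, because the difficulty is not cross-adjacencies between $S_>$ and $S_{\le}$ (new color classes ignore those anyway) but the adjacencies \emph{inside} $S_>$ across its own still-huge range of length scales. The mechanism that actually produces $f^*$ in the paper is not a two-scale divide-and-conquer but a per-link chain bound: after boosting feasibility to $\rho$-independence (see below), any two links $j,k$ that are both $f$-adjacent to a common shorter link $i$ satisfy, by Lemma~\ref{P:triangles} combined with $\rho$-independence of $j,k$ for $\rho=3c_f+31$, the inequality $\s_j/\s_i< f(\s_k/\s_i)$; hence the longer $f$-neighbors of any fixed link have effective lengths growing like iterated $f$, there are at most $f^*(\Ds)+O(1)$ of them, and a greedy coloring in non-increasing order of effective length finishes the proof. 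That chain argument is where $f^*$ comes from, and your scheme has no substitute for it.

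Your base case also has a gap. Bounding the sum in~(\ref{E:sinr}) by one term only yields $d_{ij}d_{ji}\ge\s_i\s_j$, i.e.\ $1$-independence, and this gives no metric separation between the links themselves: for example, arbitrarily many unit-length links with all senders clustered at one point and all receivers clustered at another point at distance $1$ satisfy $d_{ij}d_{ji}\approx 1\ge \s_i\s_j$ pairwise (for $\beta$ near $1$) while sitting inside a ball of constant radius, so membership in a ``narrow band around $d_{ij}d_{ji}=\s_i\s_j$'' does not let a doubling-metric packing argument bound the $f$-degree. Such a configuration is of course not cumulatively feasible, which is exactly the point: you must use the full SINR sum. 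The paper does this via signal strengthening (Corollary~2 of~\cite{HB15}), splitting any feasible set into $2\rho^{\alpha/2}$ subsets feasible with thresholds multiplied by $\rho^{\alpha/2}$, whose pairwise consequence is $d_{ij}d_{ji}\ge\rho\,\s_i\s_j$ for a large constant $\rho$; only with that slack does the packing argument (Lemma~\ref{P:simpleset}, which needs mutual sender distances of order $(\sqrt{\rho}-1)\s_i$) give a constant bound within a constant-factor length band. So both halves of your plan need the strengthening step, and the inductive half additionally needs to be replaced by the chain/greedy-degree argument.
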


Let us fix a function $f$ with properties as in the theorem.
We establish the partition in Thm.~\ref{T:sandwich} in two steps. The first step is to show that feasible set $S$ can be partitioned into a constant number of independent sets in $\cG_{\rho}^0(S)$ for any constant $\rho$, i.e., subsets $S'$ such that for every pair of links $i,j\in S'$, $d_{ij}d_{ji} > \rho \s_i\s_j$. Such  subsets are  called $\rho$-independent for short.
The second step is to show that for an appropriate constant $\rho$, each $\rho$-independent set can be partitioned into at most $O(f^*(\Ds))$ of $f$-independent subsets.

The first step is easy. Each feasible set can be partitioned into at most $2\rho^{\alpha/2}$ subsets, each of them feasible with updated thresholds $\{\rho^{\alpha/2}\beta_i\}$.
 This is a direct application of Corollary 2 of~\cite{HB15}.
Let $S'$ be such a subset and let $i,j\in S'$. The feasibility constraint for $i$ and $j$ implies:
\[
P(i)/l_i^\alpha \ge \rho^{\alpha/2} \beta_i P(j)/d_{ji}^\alpha\text{ and }P(j)/l_j^\alpha \ge \rho^{\alpha/2}\beta_j P(i)/d_{ij}^\alpha.
\]
By multiplying together the inequalities above, canceling $P(i)$ and $P(j)$ and raising to the power of $1/\alpha$, we obtain:
$d_{ij}d_{ji} \ge \rho\s_i\s_j$, as required.

The proof of the second step requires the following lemmas, which constitute the most significant technical difference from the proof of the corresponding theorem in~\cite{us:stoc15}, as they encapsulate the technicalities of dealing with our definition of conflict graphs: It is not sufficient to bound only one of the distances between links (such as $d(i,j)$ in~\cite{us:stoc15}); we need a bound on the product of two distances.

\begin{lemma}\label{P:triangles}
Let $i,j,k$ be such that $\s_i \le \s_j\le \s_k$ and $i$ is $f$-adjacent with both $j$ and $k$, where $f$ is a non-decreasing sublinear function. Then
\[
d_{jk}d_{kj} < 18\s_i\s_kf(\s_k/\s_i) + 13\s_j\s_k
+ 2\s_j\sqrt{\s_i\s_kf(\s_k/\s_i)} + \s_k\sqrt{\s_i\s_jf(\s_j/\s_i)}.
\]
\end{lemma}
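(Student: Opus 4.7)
Write $A^* := \s_i\s_j f(\s_j/\s_i)$ and $B^* := \s_i\s_k f(\s_k/\s_i)$, so $f$-adjacency gives $d_{ij}d_{ji} \le A^*$ and $d_{ik}d_{ki} \le B^*$, with $A^* \le B^*$ since $f$ is non-decreasing and $\s_j \le \s_k$. My plan is to bound each of $d_{jk}, d_{kj}$ via triangle inequality, multiply, and simplify. The only quantitative consequence of $f$-adjacency I need is $\min(d_{ij}, d_{ji}) \le \sqrt{A^*}$ and $\min(d_{ik}, d_{ki}) \le \sqrt{B^*}$, which yields a four-case split according to which of $\{d_{ij}, d_{ji}\}$ and $\{d_{ik}, d_{ki}\}$ is the smaller element of its pair.

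In each case I choose the triangle inequality routing so that the small distances appear explicitly. For instance, if $d_{ij} \le \sqrt{A^*}$ and $d_{ik} \le \sqrt{B^*}$ (both small-distance endpoints incident to $s_i$), route both paths through $s_i$:
\begin{align*}
d_{jk} &\le d(s_j, s_i) + d(s_i, r_k) \le (l_j + d_{ij}) + d_{ik}, \\
d_{kj} &\le d(s_k, s_i) + d(s_i, r_j) \le (l_k + d_{ik}) + d_{ij}.
\end{align*}
The other three cases are analogous; when, say, $d_{ji}$ and $d_{ik}$ are the small ones (so $s_j$ is near $r_i$ while $r_k$ is near $s_i$), one bound needs the longer routing $d(s_k, r_j) \le l_k + d_{ik} + l_i + d_{ji} + l_j$ because neither endpoint of link $k$ is anchored near an endpoint of link $j$. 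In every case, the upper bounds on $d_{jk}, d_{kj}$ are sums whose terms lie in $\{l_i, l_j, l_k\}$ together with the two small cross-distances.

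Expand the product. The easy summands bound directly: $l_j l_k \le \s_j\s_k$; each $l_a \cdot d$ with $d$ a small cross-distance satisfies $l_a d \le \s_a\sqrt{A^*}$ or $\s_a\sqrt{B^*}$; and squared or crossed small distances satisfy $d_{ij}^2, d_{ji}^2 \le A^* \le B^*$, $d_{ik}^2, d_{ki}^2 \le B^*$, and any product of two small cross-distances is $\le \sqrt{A^*B^*} \le B^*$. The one delicate summand is $l_k \cdot d$ with $d \in \{d_{ik}, d_{ki}\}$: bounding it by $\s_k\sqrt{B^*}$ would exceed the lemma bound, which contains only $\s_k\sqrt{A^*}$ (and not $\s_k\sqrt{B^*}$). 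The fix is the length triangle inequality
\[
l_k \;\le\; d_{ki} + l_i + d_{ik},
\]
which, multiplied by $d$ (taken to be the smaller of $d_{ik}, d_{ki}$, so $d^2 \le B^*$), yields
\[
l_k \cdot d \;\le\; d_{ik}d_{ki} + l_i d + d^2 \;\le\; B^* + \s_i\sqrt{B^*} + B^* \;\le\; 2B^* + \s_j\sqrt{B^*}.
\]

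Finally, any residual $\s_j\sqrt{A^*}$ term or extra $\s_j\sqrt{B^*}$ contribution beyond the allowed $2\s_j\sqrt{B^*}$ is absorbed via the AM-GM estimate $2\s_j\sqrt{X} \le \s_j^2 + X \le \s_j\s_k + X$ (with $X \in \{A^*, B^*\}$), trading each such summand for a share of $\s_j\s_k$ and of $B^*$. The constants pile up to stay within $13\s_j\s_k + 18B^* + 2\s_j\sqrt{B^*} + \s_k\sqrt{A^*}$ with slack. The main obstacle, I expect, is bookkeeping across the four cases: the routing and hence the terms in the expansion differ case by case, and in cases where the ``far'' cross-distance (e.g., $d_{ki}$ rather than $d_{ik}$) is the small one, one must apply the length triangle inequality in the right direction to keep the $l_k \cdot d$ term under control. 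No fundamentally new idea beyond the interplay of triangle inequality, AM-GM, and $\s_i \le \s_j \le \s_k$ seems to be needed.
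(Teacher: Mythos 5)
Your proposal is correct and takes essentially the same route as the paper's proof: detour $d_{jk}$ and $d_{kj}$ through link $i$ via the triangle inequality, expand the product, bound the resulting terms by the $f$-adjacency products $d_{ij}d_{ji}\le \s_i\s_j f(\s_j/\s_i)$ and $d_{ik}d_{ki}\le \s_i\s_k f(\s_k/\s_i)$ together with $l_a\le\s_a$ and $\s_i\le\s_j\le\s_k$, and tame the troublesome $l_k\cdot(\text{cross-distance})$ term by the same detour inequality $l_k\le d_{ki}+l_i+d_{ik}$ that underlies the paper's Lemma~\ref{P:longlink}. The only difference is bookkeeping: the paper sidesteps your four-case split and the AM--GM absorption by working with the symmetric distances $d(i,j),d(i,k)$ (so that $d(i,j)^2\le d_{ij}d_{ji}$ automatically picks the close endpoints) and bounding $\min$ and $\max$ of $\{d_{jk},d_{kj}\}$ once via Lemma~\ref{P:trapez}; your constants do fit within the stated $18,13,2,1$ in all four cases.
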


\begin{lemma}\label{P:simpleset} Let $i$ be a link and $\rho>1$.
If $E$ is a $\rho$-independent set of links where each  $j\in E$ is $f$-adjacent with $i$ and satisfies $\s_i\le \s_j \le c\s_i$ for a constant $c$, then $|E|=O(1)$.
\end{lemma}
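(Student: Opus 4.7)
My plan is to partition $E$ based on which endpoint of each $j \in E$ lies near link $i$, use $f$-adjacency with $i$ to confine the ``anchored'' endpoints to a neighborhood of radius $O(\s_i)$, use $\rho$-independence to force pairwise $\Omega(\s_i)$ separation between these anchored points, and conclude with a doubling-dimension packing bound.

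Concretely, split $E$ into $A = \{j \in E : d_{ji} \le d_{ij}\}$ and $B = E \setminus A$. For any $j \in E$, the $f$-adjacency of $i$ and $j$ combined with $\s_j/\s_i \le c$ gives $d_{ij}d_{ji} \le \s_i\s_j\,f(\s_j/\s_i) \le c f(c)\,\s_i^2$. For $j \in A$ the smaller factor $d_{ji}$ is at most $R\s_i$ where $R := \sqrt{c f(c)}$, so $s_j$ lies within distance $R\s_i$ of $r_i$; symmetrically, for $j \in B$, $r_j$ lies within $R\s_i$ of $s_i$.

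The crux is to show that for distinct $j,k \in A$, setting $D = d(s_j,s_k)$, we have $D > c_0\s_i$ for some constant $c_0 = c_0(\rho,c) > 0$. Triangle inequality gives $d_{jk} \le D + l_k$ and $d_{kj} \le D + l_j$, so $\rho$-independence yields $(D+l_j)(D+l_k) > \rho\,\s_j\s_k$. Expanding and using $l_j l_k \le \s_j\s_k$ (since $\beta_i > 1$ forces $l_i < \s_i$) leaves $D^2 + D(l_j+l_k) > (\rho-1)\s_j\s_k$; substituting $l_j+l_k \le 2c\s_i$ and $\s_j\s_k \ge \s_i^2$ and solving the quadratic gives $D > \bigl(\sqrt{c^2+\rho-1}-c\bigr)\s_i =: c_0\s_i$, which is strictly positive thanks to $\rho > 1$. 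The argument for $B$ is symmetric, using $D' = d(r_j,r_k)$ with $d_{jk} \le l_j + D'$ and $d_{kj} \le l_k + D'$.

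The doubling-dimension assumption now yields $|A| = O((R/c_0)^m) = O(1)$ points at mutual distance $\ge c_0\s_i$ inside a ball of radius $R\s_i$, and likewise $|B| = O(1)$, so $|E| \le |A| + |B| = O(1)$, with the hidden constant depending on $\rho$, $c$, $f(c)$, and the doubling dimension $m$. The only real subtlety is that $\rho$ is only assumed $>1$ rather than some comfortably large constant; positivity of $c_0$ then relies crucially on the strict inequality $l_i < \s_i$ (i.e., $\beta_i > 1$), which is what produces the gap $(\rho-1)\s_j\s_k$ after discarding $l_j l_k$ from the expansion. Everything else is standard triangle-inequality manipulation and packing.
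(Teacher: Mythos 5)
Your proof is correct and follows essentially the same route as the paper's: split $E$ by which of $d_{ij},d_{ji}$ is smaller, use $f$-adjacency with $i$ to confine the corresponding endpoints to a ball of radius $O(\s_i)$, use $\rho$-independence plus the triangle inequality to force pairwise separation $\Omega(\s_i)$, and finish with the doubling packing bound. The only differences are cosmetic (you derive the separation constant by a direct quadratic estimate rather than the paper's $\sqrt{\rho}$ contradiction, and you bound $f(\s_j/\s_i)\le f(c)$ by monotonicity instead of sublinearity); also note the gap really comes from $\rho>1$ with the non-strict bound $l_jl_k\le\s_j\s_k$, so strictness of $\beta_j>1$ is not actually essential.
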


	\begin{proof}[Proof of Theorem~\ref{T:sandwich}]
	By the discussion above, it is sufficient to show that each $\rho$-independent set $S$, for appropriate constant $\rho>1$, can be partitioned into a small number of $f$-independent sets.
	We choose $\rho=3c_f + 31$, where $c_f$ is such that $f(x)\le c_f x$ for all $x\ge 1$ (recall that $f$ is sub-linear).
		Partitioning is done by the following \emph{inductive} coloring procedure: 1. Consider the links in a non-increasing order by effective length, 2. Assign each link the first natural number that has not been assigned to an $f$-adjacent link yet.
		Clearly, such a procedure defines a partitioning of $S$ into $f$-independent subsets.

		Fix a link $i\in S$. Let $T$ denote the set of links in $j\in S$ that have greater effective length than $i$ and are $f$-adjacent with $i$. In order to complete the proof, it is enough to show that $|T|= O(f^*(\Ds))$, as $|T|$ is an upper bound on the number assigned to link $i$.

Since $f(x)$ is strongly sub-linear, there exists $x_0=\inf\{x\geq 1, f(x) < x\}+1$. Let us split $T$ into two subsets $A$ and $B$, where $A$ contains the links $j\in T$ such that $\s_j \le x_0\s_i$ and $B=T\setminus A$. By Lemma~\ref{P:simpleset}, we have that $|A|=O(1)$, so we concentrate on $B$.

Let $j,k$ be arbitrary links in $B$ such that $\s_j \le \s_k$.  By applying Lemma~\ref{P:triangles} and using the definition of $c_f$, we obtain:
	$
	d_{jk}d_{kj} < 18\s_i\s_kf(\s_k/\s_i) + (3c_f+13)\s_j\s_k.
	$
Recall that $j$ and $k$ are $(\rho=3c_f + 31)$-independent, so $d_{jk}d_{kj} > (3c_f + 31)\s_j\s_k$, which  gives us
	$
	\s_j/\s_i < f(\s_k/\s_i).
	$
	Let $1,2,\dots,t=|B|$ be an arrangement of the links in $B$ in a non-decreasing order by effective length and let $\lambda_j=\s_j/\s_i$ for  $j=1,2,\dots,t$.
 We have just shown that
	\[
	x_0\le \lambda_1 < f(\lambda_{2})\le f(f(\lambda_{3}))\le \cdots\le f^{(t-1)}(\lambda_t),
	\]
	namely, $t-1\le f^*(\lambda_t)= O(f^*(\Ds))$. Thus, $|T|=|A| + |B|=O(1) + O(f^*(\Ds))$.
		\end{proof}

\subparagraph*{Computability}
Computability of our conflict graph construction is demonstrated through the notion of \emph{inductive independence}.
An $n$-vertex graph $G$ is \emph{$k$-inductive independent} if there is an ordering  $v_1,v_2,\dots,v_n$ of vertices such that for each $v_i$, the subgraph of $G$ induced by the set $N_G(v_i)\cap \{v_i,v_{i+1},\dots,v_n\}$ has no independent set larger than $k$, where $N_G(v)$ denotes the neighborhood of vertex $v$.
It is well known, e.g.~\cite{ackoglu, yeborodin}, that vertex coloring and {\wcapacity} problems are $k$-approximable in $k$-inductive independent graphs.

\begin{theorem}\label{T:inductiveindep}
Let $f$ be a non-decreasing strongly sub-linear function with $f(x)\ge 40$ for all $x\ge 1$. For every set $L$, the graph $\cG_f(L)$  is constant inductive independent.
\end{theorem}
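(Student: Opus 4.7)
Order the links by non-decreasing effective length and, for each $v_i$, bound the largest $f$-independent subset $T$ of the later neighbors. Each $j \in T$ has $\s_j \ge \s_i$ and is $f$-adjacent to $v_i$, so by the sub-linearity of $f$ we have $\min(d_{ij}, d_{ji}) = O(\s_j)$. Hence some endpoint of $j$ lies in a ball of radius $O(\s_j)$ about $s_i$ or $r_i$; partitioning $T$ by which endpoint of $v_i$ is close, it suffices to bound one part, $T_R$, in which both endpoints of each link lie in $B(r_i, O(\s_j))$.

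For each pair $j, k \in T_R$ with $\s_j \le \s_k$, apply Lemma~\ref{P:triangles} to the triple $(v_i, j, k)$ and combine with the $f$-independence bound $d_{jk} d_{kj} > \s_j \s_k f(\s_k/\s_j)$. The same algebraic manipulations as in the proof of Theorem~\ref{T:sandwich} (AM--GM on the mixed terms, $\s_j^2 \le \s_j \s_k$, and $\sqrt{f(\lambda)/\lambda} < 1$ for $\lambda \ge x_0$) give
\[
    \lambda_j \bigl(f(\mu) - C_0\bigr) \;<\; 18\, f(\lambda_k),
\]
where $\mu = \s_k/\s_j$, $\lambda_j = \s_j/\s_i$, $\lambda_k = \s_k/\s_i$, and $C_0$ is a constant determined by $c_f$. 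Case-split on the size of $f(\mu)$. If $f(\mu) \ge 2 C_0$, the inequality becomes $\lambda_j f(\mu) < 36\, f(\lambda_j \mu)$; invoking strong sub-linearity with constant $c = 37$, one gets $f(\lambda_j \mu) \le \lambda_j f(\mu)/37$ whenever $\lambda_j \ge c'_{37}$, which produces $37 < 36$, a contradiction. So for every pair with $\lambda_j \ge c'_{37}$ one must have $f(\mu) < 2 C_0$, i.e., $\mu \le M_0 := f^{-1}(2C_0)$, a constant.

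Let $j_s$ be the first element of $T_R$ with $\lambda_{j_s} \ge c'_{37}$; the previous paragraph shows that every later element of $T_R$ lies in the interval $[\lambda_{j_s}, M_0 \lambda_{j_s}]$, a window of constant multiplicative width. Elements with $\lambda < c'_{37}$ occupy the bounded window $[1, c'_{37}]$ and are handled by Lemma~\ref{P:simpleset} (with $\rho = f(1) \ge 40$). Within any constant-width $\lambda$-window, one further splits by dyadic scale of $\s_j$; inside a single scale, the $f$-independence bound $d_{jk} d_{kj} > 40\, \s_j \s_k$ combined with the triangle inequality $d_{jk} \le d(s_j, s_k) + l_k$ forces the senders to be pairwise $\Omega(\s_j)$-separated inside $B(r_i, O(\s_j))$, and doubling-dimension packing gives $O(1)$ links per scale. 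Since only a constant number of scales fit into each window, $|T_R| = O(1)$ and therefore $|T| = O(1)$.

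The main obstacle is the second step: extracting a true constant from the pairwise inequality rather than the $O(f^*(\Ds))$ one obtains in Theorem~\ref{T:sandwich}. The hypothesis $f(x) \ge 40$ is essential in two distinct ways. First, it pins $x_0$ to a universal constant (since $f(x) \ge 40 > x$ for $x \le 40$), independent of the instance. Second, it ensures that the additive constant $C_0$ is dominated by $f(\mu)/2$ as soon as $\mu$ is nontrivial, which is what opens the door to the strong-sub-linearity contradiction and collapses the iterated constraint $\lambda_s < f(\lambda_{s+1})$ into the conclusion that $\lambda$-values in $T_R$ occupy a single constant-width window.
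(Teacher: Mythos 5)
Your skeleton matches the paper's proof: order by effective length, combine Lemma~\ref{P:triangles} with the $f$-independence of pairs $j,k$ in the later neighborhood, invoke strong sub-linearity, and finish with a packing step (Lemma~\ref{P:simpleset}). The genuine gap is in your second case, $f(\mu) < 2C_0$: you conclude $\mu \le M_0 := f^{-1}(2C_0)$, ``a constant'', but nothing in the hypotheses guarantees that $f$ ever attains the value $2C_0$. A non-decreasing strongly sub-linear function with $f(x)\ge 40$ may be bounded --- $f\equiv 40$ satisfies every hypothesis of the theorem --- and your $C_0$ cannot be forced below $20$: the bound $\sqrt{f(\lambda)/\lambda}<1$ is only available for $\lambda\ge x_0$ (and even there is not automatic for non-decreasing $f$), whereas the links in $T_R$ are only known to have $\lambda\ge 1$, or $\lambda\ge c'_{37}$, which need not exceed $x_0$. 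In general you only get $C_0\le 13+3\sqrt{c_f}$; for $f\equiv 40$ (where $c_f=40$) this is about $32$, so $2C_0>\sup f$, the case $f(\mu)<2C_0$ is the one that always occurs, $M_0$ is infinite, your window $[\lambda_{j_s},M_0\lambda_{j_s}]$ has unbounded multiplicative width, and the dyadic-scale count is no longer $O(1)$. So the argument fails exactly at the step that was supposed to extract a constant rather than the $O(f^*(\Ds))$ of Theorem~\ref{T:sandwich}.

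The paper closes this hole by making the restriction in the opposite order: it first sets aside all links with $\s_j\le c_1\s_i$, where $c_1$ is a constant (depending on $f$) chosen so large that the two square-root terms of Lemma~\ref{P:triangles} contribute less than $7$; for the remaining pairs the inequality reads $\s_j\s_k f(\s_k/\s_j)\le 18\,\s_i\s_k f(\s_k/\s_i)+20\,\s_j\s_k$, and now $f\ge 40$ absorbs the additive term unconditionally since $20\le f(\s_k/\s_j)/2$ --- i.e., your problematic case becomes empty. Strong sub-linearity (with $c=36$) then forces $\s_j\le c_2\s_i$ for the smaller link of every such pair, so all links of $T$ except possibly one satisfy $\s_j\le\max\{c_1,c_2\}\s_i$, and Lemma~\ref{P:simpleset} (applicable with $\rho=40$ because $f$-independence implies $40$-independence) finishes; no $T_R/T_S$ split, shifted windows, or extra packing are needed. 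Your proof can be repaired the same way --- discard the links with $\lambda_j\le c_1$ first (they are handled by Lemma~\ref{P:simpleset}) so that $C_0\le 20$ --- but as written the reduction to a constant-width window is not valid for bounded, or slowly growing, $f$.
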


The proof is somewhat similar to that of Thm.~\ref{T:sandwich}. The inductive independence ordering non-decreasing order of links by length. With this in mind, the proof of Thm.~\ref{T:sandwich} can be applied, with the following core difference: while in Thm.~\ref{T:sandwich} the goal was, for a link $i$, to bound the number of \emph{ $\rho$-independent} links that have greater effective length and are $f$-adjacent with $i$, here we need to bound the number of \emph{ $f$-independent} links that have greater effective length and are $f$-adjacent with $i$.

\subparagraph*{Optimality of $O(\log\log\Ds)$-tightness}
Here we show that the obtained tightness is essentially best possible, by demonstrating that every reasonable conflict graph formulation must incur an $O(\log\log\Ds)$ factor.
We depart from some basic assumptions on conflict graphs. First, since the  feasibility of a set of links is precisely determined by the values $\s_i$ and $d_{ij}$,  we assume that in a conflict graph, the adjacency of two links $i,j$ is a predicate of  variables $\s_i,\s_j,d_{ij},d_{ji}$. Another basic observation is that the feasibility formula is scale-free w.r.t. those values; hence, we assume that so is a conflict graph formulation. This allows us to reduce the number of variables in the adjacency predicate: $\frac{\s_{max}}{\s_{min}}, \frac{d_{ij}}{\s_{min}}, \frac{d_{ji}}{\s_{min}}$, where $\s_{min}$ and $\s_{max}$ are the smaller and larger values of $\s_i,\s_j$, respectively. Our construction will consist of only \emph{unit-length} links (i.e. $l_i=1$) of mutual distance at least 3. In this case, we can further reduce the number of variables by noticing that in such instances, $d_{ij}=\Theta( d_{ji})=\Theta(d(i,j))$. Thus, the conflict relation is essentially determined by two variables: $\frac{d(i,j)}{\s_{min}}$ and $\frac{\s_{max}}{\s_{min}}$. By separating the variables, the conflict predicate boils down to a relation
$
\frac{d(i,j)}{\s_{min}} > f(\frac{\s_{max}}{\s_{min}})
$
for a function $f$. Note that this  is similar to the conflict graph definition of~\cite{us:stoc15}, except that the lengths are replaced with effective lengths.

Let us show that the refinement property requires that $f(x)=\Omega(\sqrt{x})$ in such a graph.
Let us fix a function $f:[1,\infty)\rightarrow [1,\infty)$. Let $i,j$ be unit-length links with $\beta_j=1$ and $\beta_i=X^\alpha >1$, where $X$ is a parameter. Assume further that the links $i,j$ are placed on the plane so that $d(i,j)=3f(X)=3f(\s_i/\s_j)$, which means the links are $f$-independent. Thus,  $i,j$ must form a feasible set: $\frac{P(i)}{\s_i^\alpha} > \frac{P(j)}{d_{ji}^\alpha}$ and $\frac{P(j)}{\s_j^\alpha} > \frac{P(i)}{d_{ij}^\alpha}$. Multiplying these inequalities together and canceling $P(i)$ and $P(j)$ out, gives: $d_{ij}d_{ji} > \s_i\s_j=X$. This implies that we must have $d(i,j)= \Theta(\sqrt{d_{ij}d_{ji}})=\Omega(\sqrt{X})$, which in turn implies that $f(X)=d(i,j)/3=\Omega(\sqrt{X})$.

Now, a simple modification of the construction in~\cite[Thm. 9]{us:stoc15} gives a set $S$ of unit-length links arranged on the line and with appropriately chosen thresholds $\beta_i$ and distances $d(i,j)$, such that every two links are $f$-adjacent, but the whole set $S$ is feasible. Such a construction can be done with the number of links $n=\Omega(f^*(\Ds))$, i.e. there is a \emph{feasible} set of links that cannot be split in less than $\Omega(f^*(\Ds))$ $f$-independent subsets. Since $f(x)=\Omega(\sqrt{x})$, we have  $f^*(x)=\Omega(\log\log x)$,  which proves that the tightness must be at least $\Omega(\log\log\Ds)$.

\section{Approximating Fixed-Rate Scheduling}\label{S:fixedrate}

We detail now the more classical problems that can be handled with our framework, starting with those involving fixed datarates.
Intuitively, our framework can handle a problem if there is a correspondence between solutions in the physical model instance and solutions in the refinement graph. The refinement property ensures that the graph solutions map directly to feasible solutions in the physical model --- we need to ensure a (approximate) correspondence in the other direction.
We will argue that an optimal solution in the physical model has a counterpart in the graph instance,
whose quality decreases only by the tightness factor $k$.

\subparagraph*{ General Approximation Framework}\label{S:framework}
Common scheduling-related optimization problems can be classified as \emph{covering} or \emph{packing}.

In covering  problems, a feasible solution $\sigma$ contains a (ordered) covering of the set $L$ of links with feasible sets $\pi = \langle S_1, S_2,\dots, S_t\rangle $ (i.e., $\cup_{t}S_t = L$), which we call \emph{time slots}, and the objective is to minimize a function $f_\sigma(\pi)$ of the covering, which may also depend on other problem constraints.

In packing  problems, a feasible solution $\sigma$ contains a \emph{fixed} number $c$ of feasible sets (packing),  $\eta=\langle S_1,S_2,\dots,S_c\rangle $, not necessarily covering $L$, which we call \emph{channels}, and the objective is to maximize a function $g_\sigma(\eta )$ of the packing.

Given a refinement $\cG$ and a cover $\pi=\langle S_1,S_2,\dots,S_t\rangle $ of $L$ by feasible sets, we call another cover $\pi'=\langle S_1^1,\dots S_1^{h_1},S_2^1,\dots,S_t^1,\dots,S_t^{h_t}\rangle$, a refinement of $\pi$ if $\langle S_i^1,\dots S_i^{h_i}\rangle$ is a cover of $S_i$ by \emph{independent} sets in $\cG$.
Similarly, given a packing $\eta=\langle S_1,S_2,\dots,S_c\rangle$, a refinement of $\eta$ is another packing $\eta'=\langle S'_1,S'_2,\dots,S'_c\rangle$, where $S'_i\subseteq S_i$ is an independent set in $\cG$.

Formally, a covering problem is \emph{refinable} if for every $k$-tight refinement $\cG$  and a solution $\sigma$ with cover $\pi$, there is a feasible solution $\sigma'$ containing a refinement $\pi'$ of $\pi$, and such that $f_\sigma(\pi)\ge \frac{f_{\sigma'}(\pi')}{k}.$
A packing problem is \emph{refinable} if for every $k$-tight refinement $\cG$  and a solution $\sigma$ with a packing $\eta=\langle S_1,S_2,\dots,S_c\rangle $, there is a feasible solution $\sigma'$ containing a refinement $\eta'$ of $\eta$, and such that $g_\sigma(\eta)\le k\cdot g_{\sigma'}(\eta' ).$

\begin{theorem}\label{T:refapprox}
Let $\cG$ be a $k$-tight refinement of the physical model. For every  refinable problem, a $\rho$-approximation algorithm in $\cG$ gives $k \cdot \rho$-approximation in the physical model.
\end{theorem}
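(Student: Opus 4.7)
The plan is to run the standard "reduction chain" argument separately for the two problem types (covering/minimization and packing/maximization), chaining three inequalities: (i) the refinability property converts the optimal physical solution into a graph solution that loses at most a factor $k$; (ii) the $\rho$-approximation algorithm run on the graph returns a solution within $\rho$ of the graph optimum; (iii) the refinement (feasibility) property converts the output of the graph algorithm back to a valid physical solution without loss. Composing these gives the $k\rho$ bound.

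More concretely, for a covering problem, let $\sigma^*$ be an optimal physical-model solution with cover $\pi^*$ and objective value $\mathrm{OPT}_{\mathrm{phys}}=f_{\sigma^*}(\pi^*)$. Refinability (applied to $\sigma^*$) hands us a solution $\sigma'$ whose cover $\pi'$ refines $\pi^*$ and consists only of independent sets of $\cG$, with $f_{\sigma'}(\pi')\le k\cdot f_{\sigma^*}(\pi^*)$. Thus $\sigma'$ is a feasible solution of the \emph{graph} version of the problem, so the graph optimum satisfies $\mathrm{OPT}_{\cG}\le f_{\sigma'}(\pi')\le k\cdot \mathrm{OPT}_{\mathrm{phys}}$. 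Running the $\rho$-approximation algorithm on $\cG$ produces a cover $\hat\pi$ by independent sets with objective at most $\rho\cdot \mathrm{OPT}_{\cG}\le k\rho\cdot \mathrm{OPT}_{\mathrm{phys}}$. By the refinement property, each independent set in $\hat\pi$ is feasible in the physical model, so $\hat\pi$ is a valid physical-model solution of the same objective value.

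The packing case is symmetric with the inequalities reversed. Starting from an optimal physical solution $\sigma^*$ with packing $\eta^*$ and value $\mathrm{OPT}_{\mathrm{phys}}=g_{\sigma^*}(\eta^*)$, the refinability definition for packing delivers $\sigma'$ containing a refinement $\eta'$ of $\eta^*$ by graph-independent sets with $g_{\sigma'}(\eta')\ge g_{\sigma^*}(\eta^*)/k$. Hence $\mathrm{OPT}_{\cG}\ge \mathrm{OPT}_{\mathrm{phys}}/k$, and the $\rho$-approximation algorithm on $\cG$ returns a packing $\hat\eta$ with value at least $\mathrm{OPT}_{\cG}/\rho\ge \mathrm{OPT}_{\mathrm{phys}}/(k\rho)$; refinement again guarantees that each set of $\hat\eta$ is physically feasible, giving the claimed $k\rho$-approximation.

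There is not really a hard step here; the work has already been packed into the definitions of \emph{refinement}, \emph{tightness}, and \emph{refinable problem}. The only subtlety worth flagging explicitly in the write-up is that the ``solution $\sigma'$ containing the refinement'' produced by the refinability definition must be interpretable as a solution of the \emph{graph} version of the problem (so that comparing it to $\mathrm{OPT}_{\cG}$ is legitimate); this is exactly what the definition in Sec.~\ref{S:framework} is set up to guarantee, since the cover/packing inside $\sigma'$ uses only sets that are independent in $\cG$, and the auxiliary constraints of the problem do not depend on which model (physical or graph) we view the sets as belonging to.
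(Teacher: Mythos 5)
Your proof is correct and is exactly the argument the paper intends: the theorem is stated as an immediate consequence of the definitions of refinement, tightness, and refinability, and your chain of inequalities (refinability bounds the graph optimum by $k\cdot\mathrm{OPT}$, the $\rho$-approximation loses another factor $\rho$, and the refinement property maps the graph output back to a feasible physical solution at no cost) is precisely that intended reduction, handled correctly for both the covering and packing directions. No gaps; the observation that $\sigma'$ is a legitimate solution of the graph version of the problem is the right point to flag.
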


Thus, in order to obtain an approximation for a specific problem, it is sufficient to show that the problem is refinable: then the solution in a $k$-tight refinement gives a solution with an additional approximation factor $k$. Refinability requires the objective function of the problem to have certain linearity property. Examples of refinable covering problems include the ones where the objective function is the number of time slots or the sum of completion times (i.e. indices of time slots). Perhaps the simplest example of a refinable packing problem is the \emph{maximal independent set of links} problem, where the objective is the size of the feasible set (i.e., there is only a single channel).
Below, we apply the refinement framework to some important scheduling problems, which leads to $O(\log\log\Ds)$-approximation for all of them.

\subparagraph*{{\capswcapacity} with Fixed Weights}
Consider the {\wcapacity} problem, where the weights $\omega_i$ of links are fixed positive numbers. It is easy to see that this is a refinable packing problem, as the objective function -- the sum of weights --  is linear with respect to partition. Thus, since there is a constant factor approximation to {\wcapacity} in $\cG(L)$ (by computability), it gives an $O(\log\log\Ds)$-approximation in the physical model (by Thm.~\ref{T:refapprox}).

\subparagraph*{Multi-Channel Selection}
Given a natural number $c$ -- the number of channels -- the goal is to select a maximum number of links that can be partitioned into $c$ feasible subsets (a subset for each channel). Again, this is easily seen to be a refinable packing problem, as the objective function -- the total number of links across all channels, is linear w.r.t. partitioning. A simple greedy algorithm gives constant factor approximation to multi-channel selection in constant-inductive independent graphs, which translates to an $O(\log\log\Ds)$-approximation in the physical model.

\subparagraph*{TDMA Scheduling}
The goal is to partition the set $L$ of links into the minimum number of feasible subsets. This is a covering problem, and the objective function is the number of slots, which is linear w.r.t. partitioning. A simple \emph{first-fit} style greedy algorithm gives constant factor approximation to vertex coloring in constant inductive independent graphs, which gives an $O(\log\log\Ds)$-approximation to TDMA scheduling in the physical model.

\subparagraph*{Fractional Scheduling}
This is a fractional variant of TDMA scheduling with an additional constraint of link demands. A \emph{fractional schedule} for a set $L$ of links is a collection of feasible sets with rational values $\cS=\{(I_k,t_k) : k=1,2\dots,q\}\subseteq \cE_{\cP}\times\mathbb{R}_+$, where $\cE_{\cP}$ is the set of all feasible subsets of $L$. The sum $\sum_{k=1}^q{t_k}$ is the \emph{length} of the schedule $\cS$. The \emph{link capacity vector} $c_{\cS}:L\rightarrow \mathbb{R}_+$ associated with the schedule $\cS$ is given by
$
c_{\cS}(i) = \sum_{(I,t)\in \cS: I\ni i}t.
$
 Essentially, the link capacity shows how much scheduling time each link gets. Finally, a \emph{link demand vector} $d:L\rightarrow \mathbb{R}_+$ indicates how much scheduling time each link needs.

The \emph{fractional scheduling problem} is a covering type problem, where given a demand vector $d$, the goal is to compute a minimum length schedule that serves the demands $d$, namely,  for each link $i\in L$,
$
c_{\cS}(i)\ge d(i).
$
Since the cost function $\sum_{k=1}^q{t_k}$ is again linear w.r.t. partitioning of a schedule, it is readily checked that the fractional scheduling problem is also refinable.
A simple greedy algorithm presented in~\cite{wan13} achieves constant factor approximation for fractional scheduling in constant inductive independent graphs. This gives an $O(\log\log\Ds)$-approximation in the physical model.

\subparagraph*{Joint Routing and Scheduling}
 Consider an ordered set of $p$ source-destination node pairs (multihop communication requests) $(u_i,v_i)$, $i=1,2,\dots,p,$ with associated weights/utilities $\omega_i>0$, in a multihop network given by a directed graph $G$, where the \emph{edges} of the graph are the transmission links. Let $\cP_i$ denote the set of directed $(u_i,v_i)$ paths in $G$ and let $\cP=\cup_i \cP_i$. Then a \emph{path flow} for the given set of requests is a set $\cF=\{(P_k,\delta_k): k=1,2,\dots\}\subseteq \cP \times \mathbb{R}_+$.  The \emph{link flow vector} $f_{\cF}$ corresponding to path flow $\cF$, with
$
f_{\cF}(i)=\sum_{(P,\delta)\in \cF: P\ni i}{\delta}
$
for each link $i$, shows the flow along each link.

The \emph{multiflow routing and scheduling problem} is a covering problem, where given source-destination pairs with associated utilities, the goal is to find a path flow $\cF$ together with a fractional link schedule $\cS$ of length $1$, such that\footnote{Essentially, the schedule here gives a probability distribution over the feasible sets of links.} for each link $i$, the link flow is at most the link capacity provided by the schedule, $f_{\cF}(i) \le c_{\cS}(i)$, and the \emph{flow value}
\[
W=\sum_{i=1}^p \omega_i\cdot \sum_{(P_k,\delta_k)\in \cF, P_k\in \cP_i}{\delta_k}
\]
 is maximized.
Let us verify that this problem is also refinable. Consider a feasible solution in (the physical model) that consists of a path flow $\cF=\{(P_k,\delta_k): k=1,2,\dots\}$ and a schedule $\cS=\{(I_k,t_k) : k=1,2,\dots\}$ of length $\sum_{k\ge 1}t_k=1$, such that $f_{\cF}(i) \le c_{\cS}(i)$. As observed in the previous section, the schedule $\cS$ can be refined into a schedule $\cS'=\{(I_k^s,t_k)\}_{k,s}$ in $\cG(L)$, where $\cS'$ serves the same demand vector as $\cS$ does, and $\cS'$ has length at most $K=O(\log\log\Ds)$ times more than the length of $\cS$. Now we normalize the refined schedule to have length 1. Then, the following modified path flow $\cF'=\{(P_k,\delta_k/K): k=1,2,\dots\}$ together with the new schedule will be feasible in $\cG(L)$, as all link demands will be served. Moreover, the value of $\cF'$ is at most $K$ times that of $\cF$. Hence, the problem is refinable.

Thus, applying the constant factor approximation  algorithm of~\cite{wan14} for constant inductive independent conflict graphs (the result holds with unit utilities) gives an $O(\log\log\Ds)$-approximation for multiflow routing  and scheduling problem in the physical model.
It should also be noted that the fractional scheduling and routing and scheduling problems can be reduced to the {\wcapacity} problem using linear programming techniques (described e.g. in~\cite{jansen03}), as it was shown in~\cite{wan09}. We will further discuss this in Sec.~\ref{S:ratecontrol}.

\subparagraph*{Extensions to Multi-Channel Multi-Antenna Settings}
 All problems above may be naturally generalized to the case when there are several channels (e.g. frequency bands) available and moreover, wireless nodes are equipped with multiple antennas and can work in different channels simultaneously. We denote the setting with multiple antennas/channels as \emph{MC-MA}.

 It is easy to show that our refinement framework can be extended to MC-MA with very little change.
Assume that each node $u$ is equipped with $a(u)$ antennas numbered from $1$ to $a(u)$ and can use a set $\cC(u)$ of channels. Consider a link $i$ that corresponds to the pair of nodes $s_i$ and $r_i$. There are $a(s_i)a(r_i)|\cC(s_i)\cap \cC(r_i)|$ \emph{virtual} links corresponding to each selection of an antenna of the sender node $s_i$, an antenna of receiver node $r_i$ and a channel $c\in \cC(s_i)\cap \cC(r_i)$ available to both nodes.  Thus a virtual link is described by the tuple $(i, a_s, a_r, c)$, where $a_s$ ($a_r$) denotes the antenna index at $s_i$ ($r_i$, respectively), and $c$ denotes the channel. We call link $i$ \emph{the original} of its virtual links. Note that the formulation above can easily be generalized to the case where certain antennas don't work  in certain channels, e.g., due to multi-path fading.

A set of (virtual) links $S$ is feasible in MC-MA if and only if no two links in $S$ share an antenna (i.e., they do not use the same antenna of the same node), and for each channel $c$, the set of originals of links in $S$ using channel $c$ is feasible in the physical model.
Then, an $O(\log\log\Ds)$-tight refinement for the MC-MA physical model by a conflict graph can be found by a simple extension of the existing refinement for the single channel case to the virtual links (see Appendix~\ref{A:mcmr} for details).
This implies, in particular, that all scheduling problems considered in the previous sections can also be approximated in the MC-MA setting within an approximation factor $O(\log\log\Ds)$, as the corresponding approximations for the conflict graph hold with MC-MA~\cite{wan14}.

\section{Rate Control and Scheduling}\label{S:ratecontrol}

 The most important application of efficient approximation algorithms for scheduling problems with different thresholds is the application to scheduling with rate control. This is achieved first by obtaining a double-logarithmic approximation to {\wcapacity} with rate control. This will then lead to similar approximations for fractional scheduling and joint routing and scheduling problems.

\subparagraph*{{\capswcapacity} with Rate Control}
 By Shannon's theorem, given a set $S$ of links simultaneously transmitting in the same channel, the transmission rate $r(S,i)$ of a link $i$ is a function of $SIR(S,i)$. Thus, we consider the {\wcapacity} problem where each link $i$ has an associated \emph{utility function} $u^i:\mathbb{R}_+\rightarrow \mathbb{R}_+$, and the weight of link $i$ is the value of $u^i$ at $SIR(S,i)$ if link $i$ is selected in the set, and $0$ otherwise. As before, the goal is, given the links with utility functions,  to find a subset $S$ that maximizes the total weight $\sum_{i\in S} u^i(r(S,i))$. We assume that $u^i(SIR(S,i))=0$ if $SIR(S,i)<1$.

An $O(\log n)$-approximation for this variant of {\wcapacity} has been obtained in~\cite{KesselheimESA12}. We show that this can be replaced with $O(\log\log\Ds')$, where $\Ds'(L)=\max_{i,j\in L}\frac{u^i_{max}l_i}{u^j_{min}l_j}$ and $u^i_{min},u^i_{max}$ are the minimum and maximum possible utility values for the given instance and link. This is achieved by reducing the problem to {\wcapacity} in an extended instance.

Let us fix a utility function $u$. First, assume that the possible set of weights for each link is a discrete set $u_{min}=u_1<u_2<\cdots<u_{t}=u_{max}$. Then, we can replace each link $i$ with $t$ copies  $i_1,i_2,\cdots,i_t$ with different thresholds and fixed weights, where $\omega_{i_k}=u_k$ and $\beta_{i_k}=\min\{x: u^{i_k}(x) \ge u_k\}$ if $\beta_{i_k}\ge 1$ and $\omega_{i_k}=0$ otherwise. Now, the problem becomes a {\wcapacity} problem for the modified instance $L'$ with link replicas and fixed weights. Observe that no feasible set in $L'$ contains more than a single copy of the same link, as the copies occupy the same geometric place, implying that  each feasible set of the extended instance corresponds to a feasible set of the original instance, with an obvious transformation. The effective length diversity of the extended instance is  $\Ds(L')=\Ds'(L)$. Thus, using the approximation algorithm for the fixed rate {\wcapacity} problem, we  obtain an $O(\log\log \Ds'(L))$-approximation for {\wcapacity} with rate control.

For the case when the number of possible utility values is too large or the set is continuous, a standard trick can be applied. Let $u^i_{max},u^i_{min}$ be as before. The  extended instance $L'$ is constructed by replacing each link $i$ with $O(\log u^i_{max}/u^i_{min})$  copies $i_1,i_2,\dots$ of itself and assigning each replica $i_k$ weight $\omega_k=2^{k-1}$ and threshold $\beta_k=\min\{x : 2^{k-1} \le u^i(x) \le 2^{k}\}$ if $\beta_k\ge 1$ and let $\omega_k=0$ otherwise. It is easy to see that the optimum value of {\wcapacity} with fixed rates in $L'$ is again an $O(\log\log\Ds'(L))$-approximation to {\wcapacity} with rate control.

If the value $\log u^i_{max}/u^i_{min}$ is still too large, it may be inefficient to have $O(\log u^i_{max}/u^i_{min})$ copies for each link. It is another standard observation that only the last $O(\log n)$ copies of each link really matter, as restricting to only those links  degrades  approximation by a factor of at most 2.

\subparagraph*{Fractional Scheduling with Rate Control}
In this formulation, we redefine a fractional schedule to be  a set $\cS=\{(I_k,t_k) : k=1,2\dots,q\}\subseteq 2^L\times\mathbb{R}_+$, namely, $I_k$ are arbitrary subsets, rather than independent ones.  We redefine the link capacity vector ${\hat c}_{\cS}$ to incorporate the rates as follows:
\begin{equation}\label{E:ratecap}
{\hat c}_{\cS}(i) = \sum_{(I,t)\in \cS: I\ni i}t\cdot r(i,I).
\end{equation}
 The \emph{fractional scheduling with rate control} problem is to find a  minimum length schedule $\cS$ that serves a given demand vector $d$, namely, such that  for each link $i\in L$,
$
{\hat c}_{\cS}(i)\ge d(i).
$

The problem can be formulated as an exponential size linear program $LP_1$, as follows.
\begin{align}
\nonumber \min \sum_{I\subseteq L}{t_I}   \text{ subject to } & \sum_{I\subseteq L : I\ni i}t_I \cdot r(i,I) \ge d(i) && \forall i\in L\\
\nonumber & t_I\ge 0 && \forall I\subseteq L
\end{align}

The dual program $LP_2$ looks as follows:
\begin{align}
\nonumber \max \sum_{i\in L}{d(i) y_i}   \text{ subject to } & \sum_{i\in I}y_i\cdot  r(i,I) \ge 1 && \forall I\subseteq L\\
\nonumber & y_i\ge 0 && \forall i\in L
\end{align}

As~\cite[Thm. 5.1]{jansen03} states, if there is an approximation algorithm that finds a set $\hat I$ such that $\sum_{i\in \hat I}y_i r(i,\hat I)\ge \frac{1}{a}\max_{I\subseteq L}\sum_{i\in I}y_i r(i,I)$, then there is an $a$-approximation algorithm for $LP_1$, where the former algorithm acts as an approximate separation oracle for $LP_1$. But this auxiliary problem is simply a special case of the {\wcapacity} with rate control, which we can approximate within a double-logarithmic factor. Thus, there is an approximation preserving reduction from the fractional scheduling with rate control to {\wcapacity} with rate control. By the obtained approximation for {\wcapacity}, we obtain an $O(\log\log\Ds')$-approximation for fractional scheduling with rate control.

\subparagraph*{Routing, Scheduling and Rate Control}
The rate-control variant of the routing and scheduling problem is formulated in the same way as for the fixed rate setting, with the only modified constraint being the capacity constraints, which, instead of the link capacity vector $c_{\cS}$, now use the modified variant ${\hat c}_{\cS}$ that incorporates the link rates (see the definition in (\ref{E:ratecap})).

This problem can also be reduced to {\wcapacity} with rate control, using similar methods as for the fractional scheduling problem. The reduction is nearly identical to the reduction of fixed rate versions of these problems to {\wcapacity}, presented in~\cite[Thm. 4.1]{wan09}.

Thus, we can conclude that there is an $O(\log\log\Ds')$-approximation algorithm for joint routing, scheduling and rate control that uses {\wcapacity} with rate control as a subroutine.

%%
%% Bibliography
%%

%% Either use bibtex (recommended), 
\bibliographystyle{abbrv}
\bibliography{Bibliography}

%% .. or use the thebibliography environment explicitely

\appendix

\newcounter{lastthm}
\setcounter{lastthm}{\value{theorem}}

\section{Simulation Results}\label{S:simulations}

The objective of these simulations is to see how the conflict graph approximation of the physical model behaves on probabilistically generated instances. First, we would like to see how do different parameter values affect the approximation performance. This includes fine-tuning the parameters $\gamma$ and $\delta$ in the conflict graph $\cG_\gamma^\delta$, as well as the power assignments $P_{\tau}$ predicted by Thm.~\ref{T:obliviouspowers}.

We tested the {\wcapacity} problem with uniform rates against some simple algorithms and heuristics.
We generated random link sets by placing $n$ links of lengths in the interval $(1,l_{max})$ in a $1000\times 1000$ square on the plane, where $l_{max}$ varies from 10 to 250. While the link positions and directions are chosen from uniform random distribution, the link lengths follow log-uniform distribution so that shorter links are more frequent than longer. Further, each link $i$ has fixed weight $\omega_i$ between 1 and 100, sampled from log-uniform distribution. The physical model parameters were set as $\alpha = 2.8$ and $\beta = 1.0$, and  the number of links was set to 400. The results are the averages over 20 instances.

We implemented the local ratio {\wcapacity} algorithm of~\cite{ackoglu} for conflict graphs $\cG_{\gamma}^\delta$ with $\delta = \delta_0 + \epsilon (1 - \delta_0)$ ($\delta_0$ given by Theorem~\ref{T:obliviouspowers}) and $\epsilon = 0.1$ and $0.9$. The power assignment for checking feasibility is as recommended by Lemmas~\ref{P:mainlemma1} and \ref{P:mainlemma2}. The best factor $\gamma$ is found by binary search. The first thing to notice from the figure is that, at least for randomly deployed instances, smaller values of $\delta$ are more efficient.

\begin{figure}[htb]
\includegraphics[width=1.0\textwidth]{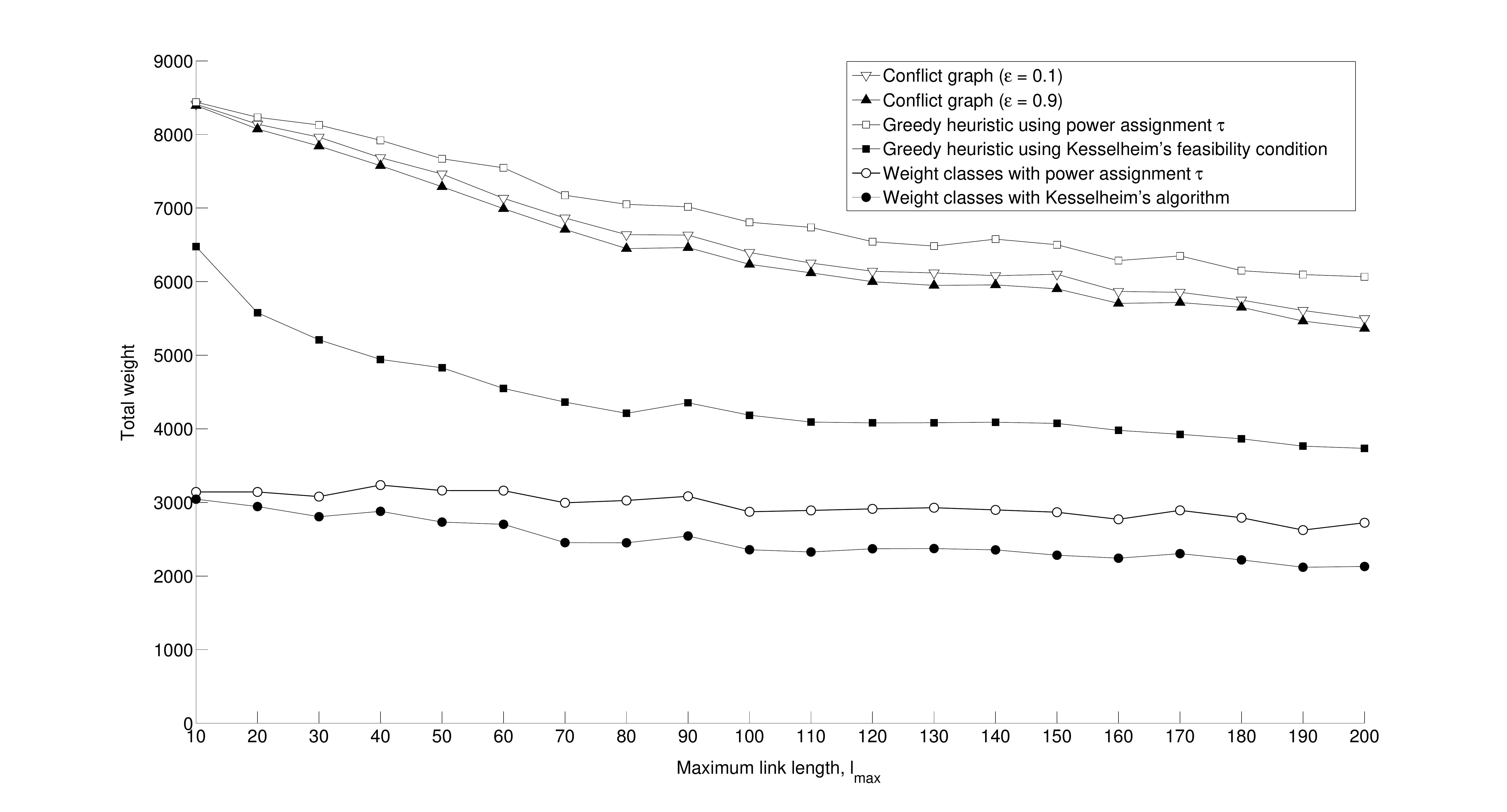}
\caption{Solutions of {\wcapacity} as a function of link diversity.}
\label{fig:weightedcapacity}
\end{figure}

These results can be compared with two other algorithms, based on uniform weight  {\wcapacity} algorithms.

The first algorithm is a greedy heuristic, which orders the links in increasing order of \emph{length by weight} ratio and accepts or rejects a link by simply checking feasibility each time. The first variant (top graph) checks feasibility using the same power assignment $P_\tau$ recommended by Lemmas~\ref{P:mainlemma1} and~\ref{P:mainlemma2}. The second variant (middle graph) checks feasibility based on a condition proposed by Kesselheim~\cite{kesselheimconstantfactor} (it should be noted, though, that this condition is used as it is presented in~\cite{kesselheimconstantfactor}; a more careful tuning of parameters may give better results). As it can be seen, this heuristic performs best on randomly generated instances, event though it might have poor worst-case performance.

 The second algorithm simply partitions the links into weight classes where the weight diversity is at most a factor of two, executes the uniform greedy {\wcapacity} algorithm on each subset and chooses the best solution obtained. This algorithm can be seen to perform rather poorly, although theoretically it has provable  logarithmic worst-case approximation.

Overall, it can be concluded that the conflict graph approximation not only gives theoretical worst-case guarantees, but also works fairly well on randomly generated instances. The experiments show that by fine-tuning the parameters for particular instances, good results can be obtained.

\section{Omitted Proofs: The Proof of Theorem~\ref{T:obliviouspowers}}

\setcounterref{theorem}{T:obliviouspowers}
\addtocounter{theorem}{-1}

\begin{theorem}
Let $\delta_0=\frac{\alpha-m+1}{2(\alpha-m) + 1}$. If $\delta\in (\delta_0,1)$ and the constant $\gamma>1$ is large enough, there is a value $\tau \in (0,1)$ such that each independent set in $\cG_{\gamma}^{\delta}$ is $P_{\tau}$-feasible.
\end{theorem}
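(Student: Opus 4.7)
The strategy is to fix an independent set $S$ in $\cG_\gamma^\delta$ and a link $i \in S$, and show that the total interference $\sum_{j \in S \setminus \{i\}} P_\tau(j)/d_{ji}^\alpha$ is at most $P_\tau(i)/\s_i^\alpha$. With $P_\tau(j) \sim \s_j^{\tau\alpha}$, this reduces to bounding $\sum_j \s_j^{\tau\alpha}/d_{ji}^\alpha$ by a constant times $\s_i^{(\tau - 1)\alpha}$, and the constant is then absorbed into the proportionality constant in $P_\tau$ or into the choice of $\gamma$.

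First I would partition $S \setminus \{i\}$ into equilength classes $C_k = \{j : \s_j/\s_i \in [2^k, 2^{k+1})\}$ for $k \in \mathbb{Z}$, and bound the contribution of each class separately. Within a class, the $f$-independence constraint simplifies: since $\s_{\max}/\s_{\min} \le 2$, the factor $f(\s_{\max}/\s_{\min})$ is a constant, so $d_{jj'}d_{j'j} \ge \gamma' \s_j \s_{j'}$ for a constant $\gamma'$ depending only on $\gamma$ and $\delta$. Using the triangle-inequality bound $|d_{jj'} - d_{j'j}| \le l_j + l_{j'}$ together with $l_j \le \s_j$ and $l_{j'}\le \s_{j'}$, this pins down a pairwise separation of order $\sqrt{\gamma'}\cdot 2^k \s_i$ between nodes of distinct links in $C_k$.

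Next I would use the $f$-independence of $i$ with each $j \in C_k$, which reads $d_{ij} d_{ji} > \gamma \s_i \s_j \cdot 2^{|k|\delta}$, to localize $C_k$ to annuli around the nodes of $i$. By the doubling-dimension property of the metric, the number of links in $C_k$ whose sender lies in a ball of radius $R$ around $s_i$ is $O((R/(2^k \s_i \sqrt{\gamma'}))^m)$. Combining this packing count with the per-link interference $\s_j^{\tau\alpha}/d_{ji}^\alpha$ and summing over dyadic annuli $d_{ji} \in [R, 2R)$ gives the contribution of $C_k$ as a geometric series in $R$; the series converges because $\alpha > m$, and its leading term is controlled by the smallest admissible $R$, obtained from the $f$-independence with $i$ by AM-GM on the product bound and the triangle inequality $|d_{ij} - d_{ji}| \le l_i + l_j$. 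Summing the resulting per-class bounds over $k \in \mathbb{Z}$ produces two geometric tails, one for $k>0$ and one for $k<0$, each of whose convergence imposes an algebraic inequality on $\tau$, $\delta$, $\alpha$, and $m$; solving both simultaneously gives exactly $\delta > \delta_0 = (\alpha-m+1)/(2(\alpha-m)+1)$ and pins down an admissible $\tau \in (0,1)$. Finally, taking $\gamma$ large enough compresses the geometric series sufficiently to absorb all multiplicative constants and deliver the target bound.

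The main obstacle is the asymmetry between $d_{ij}$ and $d_{ji}$. In the uniform-threshold case treated in \cite{us:stoc15,us:fsttcs15}, a single symmetric distance $d(i,j)$ controlled everything, whereas here the conflict predicate is a product $d_{ij}d_{ji}$ while the SINR sum depends on $d_{ji}$ alone. Bridging the two requires the triangle-inequality bound $|d_{ij} - d_{ji}| \le l_i + l_j$, which is tight only when the inter-link distance dominates the actual link lengths; the regime $l_j \ll \s_j$ (which can occur since $\beta_j$ may be large) must be threaded through both the within-class separation argument and the between-class packing argument, and this is the principal technical adaptation required to move beyond the uniform setting of \cite{us:fsttcs15}.
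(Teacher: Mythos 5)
Your overall architecture is the same as the paper's: reduce to bounding $I_\tau(S,i)=\sum_{j}\s_j^{\tau\alpha}\s_i^{(1-\tau)\alpha}/d_{ji}^\alpha$, split the interferers into dyadic equilength classes, pack each class in annuli using the doubling property and the pairwise separation coming from $f$-independence, and sum two geometric tails whose convergence constrains $\tau$, with large $\gamma$ absorbing constants (this is exactly Lemma~\ref{P:oblcore} combined with Lemmas~\ref{P:mainlemma1} and~\ref{P:mainlemma2}). Two local slips: interference on link $i$ is received at $r_i$, so senders must be packed around $r_i$ (the paper even splits each class into links closer to $r_i$ and to $s_i$ and packs around the respective endpoint); and your diagnosis of the hard regime is inverted --- when $l_j\ll\s_j$ the two directed distances differ by at most $l_i+l_j$, hence are both close to the geometric mean and that case is easy; the troublesome case is $l\approx\s$ (small $\beta$) combined with large effective-length ratio.

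The genuine gap is the one sentence you do not execute: that the smallest admissible radius, obtained ``by AM-GM on the product bound and the triangle inequality,'' yields exponent conditions that ``solve to exactly $\delta_0$.'' What that argument actually gives from $d_{ji}(d_{ji}+l_i+l_j)\ge d_{ij}d_{ji}>\gamma\s_i\s_j(\s_{max}/\s_{min})^\delta$ is two regimes: the geometric-mean bound, and, once $\s_{max}/\s_{min}\gtrsim\gamma^{1/(1-\delta)}$ with $l_{max}\approx\s_{max}$, only $d_{ji}\gtrsim\gamma\s_i\s_j(\s_{max}/\s_{min})^\delta/(l_i+l_j)$. Carrying the second regime through your annulus computation adds constraints beyond the two you anticipate. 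Concretely, with $\s_i=l_i=1$ and a link $j$ with $\beta_j=1$, $l_j=\s_j=\Lambda$, sender at distance $2\gamma\Lambda^\delta$ from $r_i$ and $r_j$ on the far side, the pair is $f$-independent (the product is $\approx 2\gamma\Lambda^{1+\delta}$) yet $I_\tau(j,i)\approx\gamma^{-\alpha}\Lambda^{(\tau-\delta)\alpha}$, forcing $\tau\le\delta$; symmetrically, $\Theta(\gamma\Lambda^{\delta m})$ unit-effective-length links, pairwise $\sqrt{\gamma}$-separated in an annulus of radius $\approx\gamma\Lambda^\delta$ around the receiver of a link with $\s_i=l_i=\Lambda$, form an independent set in $\cG_\gamma^\delta$ and force $\tau\ge 1-\delta(\alpha-m)/\alpha$. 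These are simultaneously satisfiable only when $\delta\ge\alpha/(2\alpha-m)$, which exceeds $\delta_0$ for $m\ge 2$; e.g.\ for $\alpha=2.8$, $m=2$, $\delta=0.7\in(\delta_0,1)$ your route leaves no admissible $\tau$, and in particular no $\tau$ in the interval $(b,e)$ the paper prescribes. (The paper's own write-up avoids seeing this by asserting that the product bound implies $d(i,j)>\sqrt{\gamma}\,\s_{max}^{(1+\delta)/2}\s_{min}^{(1-\delta)/2}$, which is precisely what these configurations defeat; your more honest treatment of the asymmetry exposes the issue but does not resolve it.) So the proposal cannot be closed as written: the high-diversity asymmetric regime needs a genuinely different argument, not just AM-GM plus the triangle inequality, and the claim that the tails converge exactly for $\delta>\delta_0$ is unsupported.
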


Before going into details, note that by setting the power assignment $P_\tau$ in the feasibility formula, we have that a set $S$ of links is $P_\tau$-feasible if and only if for each link $i\in S$,
\[
I_{\tau}(S,i)=\sum_{j\in S\setminus \{i\}}\frac{\s_j^{\tau\alpha}\s_i^{(1-\tau)\alpha}}{d_{ji}^\alpha} < 1,
\]
where the additive operator $I_{\tau}$ is defined as follows: for all $i,j$, we set $I_\tau(j,i)=\frac{\s_j^{\tau\alpha}\s_i^{(1-\tau)\alpha}}{d_{ji}^\alpha}$, $I_{\tau}(i,i)=0$ and $I_{\tau}(S,i)=\sum_{j\in S}I_{\tau}(j,i)$.

The proof is a simple combination of Lemmas~\ref{P:mainlemma1} and~\ref{P:mainlemma2} below.
In order to bound $I_\tau(S,i)$ for a given set $S$ and a link $i$, we split $S$ into equilength subsets, bound the contribution of each subset separately (using Lemma~\ref{P:oblcore}), then combine the bounds into one.
The core of the proof, Lemma~\ref{P:oblcore}, is a careful application of a common packing argument.

\setcounter{theorem}{\value{lastthm}}

\begin{lemma}\label{P:oblcore}
Let $\mu, \tau \in (0,1)$ and $\gamma \ge 1$, let $S$ be an equilength set of links such that for all $j,k\in S$, $d(j,k) > \gamma \s_0$, where $\s_0=\min_{j\in S}\s_j$, and let link $i$ be such that $d(i,j) > \gamma \s_i^\mu\s_j^{1-\mu}$ for all $j\in S$.
Then,
 \[
 \displaystyle I_{\tau}(S,i)=
 O\left(\gamma^{-\alpha} \left(\frac{\s_i}{\s_0}\right)^{(1-\tau)\alpha - \mu(\alpha-m)}\cdot \min\left\{1,\frac{\s_i}{\s_0}\right\}^{-\mu} \right).
 \]
\end{lemma}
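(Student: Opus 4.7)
The plan is to apply a standard dyadic packing argument in the doubling metric, anchored at the receiver $r_i$ of link $i$. First I would normalize: since $S$ is equilength, $\s_j\in[\s_0,2\s_0]$ for every $j\in S$, so each summand in $I_\tau(S,i)=\sum_{j}\s_j^{\tau\alpha}\s_i^{(1-\tau)\alpha}/d_{ji}^\alpha$ is $\Theta\bigl(\s_0^{\tau\alpha}\s_i^{(1-\tau)\alpha}/d_{ji}^\alpha\bigr)$. The hypothesis $d(i,j)>\gamma\s_i^\mu\s_j^{1-\mu}$, combined with $d_{ji}\ge d(i,j)$ and $\s_j\ge\s_0$, yields the uniform lower bound $d_{ji}>R_0:=\gamma\s_i^\mu\s_0^{1-\mu}$ for every $j\in S$.

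Second, I would partition $S$ dyadically by distance to $r_i$: set $A_k=\{j\in S : d_{ji}\in[2^kR_0,2^{k+1}R_0)\}$ for $k\ge 0$. Since the senders of links in $S$ are pairwise $\gamma\s_0$-separated (this is the content of the $d(j,k)>\gamma\s_0$ hypothesis applied to the endpoint pairs), the doubling property of the metric gives
\[
|A_k|\le C\cdot\max\bigl\{1,\;(2^kR_0/(\gamma\s_0))^m\bigr\}.
\]
If $R_0\ge\gamma\s_0$, equivalently $\s_i\ge\s_0$, then every annulus is ``large'' and the second term in the max applies throughout; summing $|A_k|\cdot\s_0^{\tau\alpha}\s_i^{(1-\tau)\alpha}/(2^kR_0)^\alpha$ over $k\ge0$ gives a geometric series in $k$ whose ratio is $2^{-(\alpha-m)}$, convergent because $\alpha>m$. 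Substituting $R_0=\gamma\s_i^\mu\s_0^{1-\mu}$ and collecting exponents gives $O\bigl(\gamma^{-\alpha}(\s_i/\s_0)^{(1-\tau)\alpha-\mu(\alpha-m)}\bigr)$, matching the claim since $\min\{1,\s_i/\s_0\}^{-\mu}=1$ here. If instead $\s_i<\s_0$, I would split the annuli at radius $\gamma\s_0$: the ``far'' part with $2^kR_0\ge\gamma\s_0$ is handled exactly as above but starting at $\gamma\s_0$, while for the ``near'' part only the trivial count $|A_k|=O(1)$ is available, and the near-sum is dominated by the innermost annulus at distance $\Theta(R_0)$, contributing $O\bigl(\s_0^{\tau\alpha}\s_i^{(1-\tau)\alpha}/R_0^\alpha\bigr)$. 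A direct arithmetic check shows the near contribution equals the Case~1 bound multiplied by the extra factor $(\s_0/\s_i)^\mu=\min\{1,\s_i/\s_0\}^{-\mu}$.

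The main obstacle is the second case. The ``natural'' packing scale of the problem is the pairwise separation $\gamma\s_0$, but when $\s_i<\s_0$ the innermost admissible distance $R_0$ is strictly smaller than $\gamma\s_0$, so the doubling packing bound is slack on the near annuli and must be replaced by the pigeonhole count ``at most one sender per sub-ball of radius $\gamma\s_0/2$''. Verifying that this substitution produces precisely the surplus factor $(\s_0/\s_i)^\mu$, rather than a larger power of $\s_0/\s_i$, is the delicate bookkeeping step that makes the near regime line up with the claimed $\min\{1,\s_i/\s_0\}^{-\mu}$ factor. A minor technical point is translating the hypothesis, which is phrased in terms of the symmetric metric distance $d(i,j)$, into the bound on the directed distance $d_{ji}$ appearing in the interference formula; this is immediate from $d_{ji}\ge d(i,j)$.
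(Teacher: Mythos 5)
Your overall strategy --- dyadic annuli centered at $r_i$, packing the senders via the pairwise separation $\gamma\s_0$, and a geometric series driven by $\alpha>m$ --- is the same packing argument the paper uses; your version is in fact slightly cleaner, since working directly with $d_{ji}=d(s_j,r_i)$ lets you skip the paper's split of $S$ into links closer to $r_i$ versus closer to $s_i$. Your Case 1 ($\s_i\ge\s_0$) is correct and matches the stated bound.

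The gap is the final ``direct arithmetic check'' in Case 2. The near-field contribution you compute is $O(\s_0^{\tau\alpha}\s_i^{(1-\tau)\alpha}/R_0^{\alpha})=O\bigl(\gamma^{-\alpha}(\s_i/\s_0)^{(1-\tau-\mu)\alpha}\bigr)$, and this equals your Case-1 bound multiplied by $(\s_0/\s_i)^{\mu m}$, not by $(\s_0/\s_i)^{\mu}$: relative to the lemma's claimed bound the exponent is off by $\mu(m-1)$, so the two regimes only line up when $m=1$, not in the plane ($m=2$) or any doubling dimension $m>1$. Moreover, this is not slack that better bookkeeping can remove: a single link $j$ with $\s_j=\s_0\gg\s_i$ placed at distance $d_{ji}$ just above $\gamma\s_i^{\mu}\s_0^{1-\mu}$ satisfies all hypotheses and alone contributes $\Theta\bigl(\gamma^{-\alpha}(\s_i/\s_0)^{(1-\tau-\mu)\alpha}\bigr)$, exceeding the stated bound by the unbounded factor $(\s_0/\s_i)^{\mu(m-1)}$. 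So what your argument honestly yields is the weaker (and tight) bound with extra factor $\min\{1,\s_i/\s_0\}^{-\mu m}$ in place of $\min\{1,\s_i/\s_0\}^{-\mu}$. For context, the paper's own proof glosses over exactly this regime: in its Abel-summation step, the first-annulus term $|S_2|\bigl(q^{-\alpha}-(q+1)^{-\alpha}\bigr)$ with $q=(\s_i/\s_0)^{\mu}<1$ is bounded by $\alpha(q+1)^{-(\alpha-m+1)}$, an inequality that is off by a factor of order $q^{-\alpha}$ there. The corrected exponent still supports Lemma~\ref{P:mainlemma2}, but only after tightening the constraint on $\tau$ to $\tau<1-\mu$ (hence a different $\delta_0$). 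As written, though, your last step asserts an identity that is false for $m>1$, so the proof does not establish the lemma in the form stated.
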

\begin{proof}
 First, let us split $S$ into two subsets $S'$ and $S''$ such that $S'$ contains the links of $S$ that are closer to $r_i$ than to $s_i$,
i.e., $S'=\{j\in S: \min\{d(s_j,r_i), d(r_j,r_i)\} \leq \min\{d(s_j,s_i), d(r_j,s_i)\}\}$ and $S''=S\setminus S'$. Let us consider the set $S'$ first.

 For each link $j\in S'$, let $p_j$ denote the  endpoint of $j$ that is closest to node $r_i$. Denote $q= (\s_i/\s_0)^{\mu}$.
Consider the subsets $S_1,S_2,\dots$ of $S'$, where
 $
 S_r=\{j\in S': d(j,i)=d(p_j,r_i)\leq \gamma(q \s_0+(r-1)\s_0)\}.
 $
 Note that $S_1$ is empty: for every $j\in S'$, $d(j,i)>\gamma q\s_0$. Thus, $S'=\cup_{r=2}^\infty{S_r}$. Let us fix an $r>1$.
For every pair of links $j,k\in S_r$, we have that
 $d(p_j,p_k) \ge d(j,k) > \gamma\s_0$  and that $d(p_j,r_i)\leq \gamma (q \s_0+(r-1)\s_0)$ for each $j\in S_r$ (by the definition of $S_r$), so using the doubling property of the metric space, we get the following bound:
\begin{equation}
|S_r|=|\{p_j\}_{j\in S_r}|\leq C\cdot \left(\frac{\gamma(q\s_0+(r-1)\s_0 )}{\gamma\s_0}\right)^{m} = C \left(q+r-1\right)^{m}.\label{E:strs}
\end{equation}
Note also that $\s_j \leq 2\s_0$ and $d(i,j) \ge \gamma(q\s_0+(r-2)\s_0)$ for every link $j\in S_r\setminus S_{r-1}$ with $r>1$; hence,
\begin{equation}
I_{\tau} (j, i) \le \frac{\s_j^{\tau\alpha} \s_i^{(1 - \tau)\alpha}}{d(i,j)^\alpha}
 \leq \left(\frac{\s_i}{\s_0}\right)^{(1-\tau)\alpha}\left(\frac{2\s_0}{\gamma(q\s_0+(r-2)\s_0)}\right)^\alpha =\frac{Z_i}{\left(q+r-2\right)^\alpha},\label{E:feqs}
\end{equation}
where $Z_i=(2/\gamma)^\alpha (\s_i/\s_0)^{(1-\tau)\alpha}$.
Recall that $S_{r-1}\subseteq S_r$ for all $r>1$, $S_1=\emptyset$ and $S'=\cup_{r=2}^\infty{S_r}$. Using~(\ref{E:strs}) and (\ref{E:feqs}), we have:
\begin{align*}
{I_{\tau}(S',i)} & = \sum_{r\geq 2}{\sum_{j\in S_r\setminus S_{r-1}}{I_{\tau}(j,i)}} \nonumber \\
& \leq \sum_{r\geq 2}{\left(|S_r|-|S_{r-1}|\right)\frac{Z_i}{\left(q+r-2\right)^\alpha}} \nonumber \\
& = Z_i \cdot \sum_{r\geq 2}{|S_r|\left( \frac{1}{\left(q+r-2\right)^\alpha} - \frac{1}{\left(q+r-1\right)^\alpha} \right)}\\
& \le Z_i \cdot \sum_{r\geq 2}{\frac{\alpha}{(q+r-1)^{\alpha-m+1}}}\\
&  = O(Z_i) \cdot\left(\frac{1}{q^{\alpha - m}} + \frac{1}{q^{\alpha - m + 1}}\right). %\\
\end{align*}
The proof for the set $S'$ now follows by plugging the values of $q$ and $Z_i$ in the expression above.

The proof holds symmetrically for the set $S''$, by swapping the senders with corresponding receivers in the argument.
\end{proof}

 \begin{lemma}\label{P:mainlemma1}
 Let $S$ be a set of links that is independent in $\cG_\gamma^\delta$ and let link $i$ be such that $\s_i\ge\max_{j\in S}\s_j$. Then for each $ \tau > 1- \frac{1+\delta}{2} (\alpha-m)/\alpha$,
 $
 I_{\tau} (S, i)= O\left(\gamma^{-\alpha/2}\right).
 $
 \end{lemma}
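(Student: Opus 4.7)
The plan is to reduce to the packing estimate of Lemma~\ref{P:oblcore} by slicing $S$ into equilength subsets and summing a convergent geometric series. The exponent $\mu=(1+\delta)/2$ is dictated by the hypothesis: Lemma~\ref{P:oblcore} yields a factor $(\s_i/\s_0)^{(1-\tau)\alpha-\mu(\alpha-m)}$, and the assumption $\tau>1-\mu(\alpha-m)/\alpha$ is exactly what makes this exponent negative.

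First I would translate the $f$-independence of $\{i\}\cup S$ in $\cG_\gamma^\delta$ into the metric distance bounds Lemma~\ref{P:oblcore} requires. For each $j\in S$ the adjacency condition gives
\[
d_{ij}d_{ji}>\gamma\,\s_i^{1+\delta}\s_j^{1-\delta}=\gamma\bigl(\s_i^\mu\s_j^{1-\mu}\bigr)^2,
\]
so $\max\{d_{ij},d_{ji}\}>\sqrt{\gamma}\,\s_i^\mu\s_j^{1-\mu}$. Combining this with the triangle-inequality bound $d(i,j)\ge\max\{d_{ij},d_{ji}\}-(l_i+l_j)\ge\max\{d_{ij},d_{ji}\}-2\s_i$ (using $l_j\le\s_j\le\s_i$) produces $d(i,j)>\gamma'\,\s_i^\mu\s_j^{1-\mu}$ with $\gamma'=\Omega(\sqrt{\gamma})$ for $\gamma$ large enough. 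An analogous computation for a pair $j,k$ inside one equilength class (where $\s_j\s_k\ge\s_0^2$) delivers the companion bound $d(j,k)>\gamma'\s_0$ required by the lemma.

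Next I would partition $S$ into equilength slices $S_q=\{j\in S:\s_j\in(2^{-q-1}\s_i,2^{-q}\s_i]\}$ for $q\ge 0$ and apply Lemma~\ref{P:oblcore} to each with parameters $(\gamma',\mu)$ and minimum effective length $\s_0\approx 2^{-q}\s_i$. Since $\s_i\ge\s_0$ the factor $\min\{1,\s_i/\s_0\}^{-\mu}$ equals $1$, so
\[
I_\tau(S_q,i)=O\bigl(\gamma^{-\alpha/2}\cdot 2^{q\,[(1-\tau)\alpha-\mu(\alpha-m)]}\bigr).
\]
The hypothesis on $\tau$ makes the bracketed exponent strictly negative, so summing over $q\ge 0$ forms a convergent geometric series dominated by its $q=0$ term, giving $I_\tau(S,i)=\sum_q I_\tau(S_q,i)=O(\gamma^{-\alpha/2})$.

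The main obstacle is the very-short-link tail, namely those $j\in S$ for which $\s_j/\s_i$ drops below the threshold $(c/\sqrt{\gamma})^{2/(1-\delta)}$: there the additive slack $2\s_i$ swamps $\sqrt{\gamma}\,\s_i^\mu\s_j^{1-\mu}$ and the translation to Lemma~\ref{P:oblcore}'s hypothesis breaks. The fix is to extract a stronger distance bound directly from the product constraint in this regime: if $d_{ji}\le 4\s_i$, then $d_{ij}>(\gamma/4)\s_i^\delta\s_j^{1-\delta}$, and the triangle inequality upgrades this to $d_{ji}=\Omega(\gamma)\,\s_i^\delta\s_j^{1-\delta}$; the linear-in-$\gamma$ dependence of this bound more than compensates for its weaker scaling in $\s_j$, yielding the same $O(\gamma^{-\alpha/2})$ total contribution from the exceptional tail, either by re-invoking Lemma~\ref{P:oblcore} with exponent $\mu=\delta$ or by a direct packing estimate.
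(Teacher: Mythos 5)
Your main-regime argument is exactly the paper's: slice $S$ into equilength classes, verify the two separation hypotheses of Lemma~\ref{P:oblcore} with $\mu=(1+\delta)/2$ and separation parameter $\Theta(\sqrt{\gamma})$, and sum the resulting geometric series, whose convergence is precisely the hypothesis on $\tau$. That part is fine, and you are in fact more careful than the paper, which simply asserts that $d_{ij}d_{ji}>\gamma\s_i^{1+\delta}\s_j^{1-\delta}$ implies $d(i,j)>\sqrt{\gamma}\,\s_i^{(1+\delta)/2}\s_j^{(1-\delta)/2}$ for every $j\in S$; as you observe, the triangle-inequality slack $l_i+l_j\le 2\s_i$ kills this implication once $\s_j/\s_i$ drops below a constant threshold, so the short-link tail genuinely needs separate treatment.

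The gap is in your treatment of that tail. A small repairable point first: your derivation is backwards (the bound $(\gamma/4)\s_i^\delta\s_j^{1-\delta}$ on $d_{ij}$ can itself be smaller than the slack $2\s_i$ deep in the tail, so subtracting $l_i+l_j$ gives nothing); the correct route is that $d_{ji}\le 4\s_i$ forces $d_{ij}\le d_{ji}+l_i+l_j\le 6\s_i$ and hence $d_{ji}>(\gamma/6)\s_i^\delta\s_j^{1-\delta}$. The substantive problem is the final claim that this yields the same $O(\gamma^{-\alpha/2})$ total. With the exponent $\mu=\delta$, Lemma~\ref{P:oblcore} gives a per-class contribution proportional to $(\s_i/\e_t)^{(1-\tau)\alpha-\delta(\alpha-m)}$, and the hypothesis $\tau>1-\frac{1+\delta}{2}\cdot\frac{\alpha-m}{\alpha}$ only makes $(1-\tau)\alpha-\frac{1+\delta}{2}(\alpha-m)$ negative; since $\delta<\frac{1+\delta}{2}$, there is a nonempty subrange of admissible $\tau$ for which $(1-\tau)\alpha-\delta(\alpha-m)>0$, and then the sum over tail classes grows polynomially in $\Ds(S)$ --- a loss no constant power of $\gamma$ can offset, since $\gamma$ is a constant while $\Ds$ is unbounded. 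Nor can a better distance bound rescue the plan: $d_{ji}=\Theta(\gamma\s_i^\delta\s_j^{1-\delta})$ is attainable (place a short link collinearly just beyond $r_i$ at that distance; then $d_{ij}\approx\s_i$ and the pair is independent in $\cG_\gamma^\delta$), and packing $\Theta\bigl((\gamma\s_i^\delta\e^{1-\delta}/(\sqrt{\gamma}\e))^m\bigr)$ such links of effective length $\e$ in the corresponding annulus around $r_i$ keeps the whole set independent while producing total interference on the order of $(\s_i/\e)^{(1-\tau)\alpha-\delta(\alpha-m)}$ up to $\gamma$-factors. So the tail can only be closed under the stronger requirement $\tau>1-\delta\cdot\frac{\alpha-m}{\alpha}$ (which still leaves a nonempty window against Lemma~\ref{P:mainlemma2}, so the choice of $\tau$ in Theorem~\ref{T:obliviouspowers} survives); your patch as stated, for every $\tau$ in the lemma's range, cannot be made to work. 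The paper's own proof does not separate this regime at all --- it applies the $\mu=(1+\delta)/2$ bound to all of $S$ --- so your extra care exposes the step it leaves implicit, but the proposed fix does not close it.
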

\begin{proof}
Let us split $S$ into equilength subsets $L_1, L_2,\dots$  with
$
L_t=\{j\in S: 2^{t-1}\s_0 \leq \s_j<2^t \s_0\},
$
 where $\s_0=\min_{j\in S}\{\s_j\}$. Let $\e_t=\min_{j\in L_t}\s_j$. The independence condition between $i$ and any other link $j\in S$ is $d_{ij}d_{ji} > \gamma \s_i^{1+\delta}\s_j^{1-\delta}$, which implies that $d(i,j) > \sqrt{\gamma}\s_i^{\frac{1+\delta}{2}}\s_j^{\frac{1-\delta}{2}}$. Similarly, $d(j,k) > \sqrt{\gamma}\e_t$ for all $j,k\in L_t$. By applying  Lemma~\ref{P:oblcore} with $\gamma=\sqrt{\gamma}$ and $\mu=(1+\delta)/2$, we obtain
\[
{I_{\tau}(L_t,i)} = O\left(\gamma^{-\alpha/2}\left(\frac{\e_t}{\s_i}\right)^{\mu(\alpha-m) - (1-\tau)\alpha }\right).
\]
Let us combine the bounds above into a geometric series:
\begin{align*}
{I_{\tau}(S,i)} &= \sum_{1}^{\infty}{I_{\tau}(L_t,i)} \\
&\le \frac{O(\gamma^{-\alpha/2})}{\s_i^{\mu(\alpha-m)  - (1-\tau)\alpha}}\sum_{t=0}^{\lceil\log{\s_i/\s_0}\rceil}{(2^{t}\s_0)^{\mu(\alpha-m)  - (1-\tau)\alpha}}.
\end{align*}
Recall that we assumed $\tau > 1- \mu (1-m/\alpha)$; hence, $\mu(\alpha-m)  - (1-\tau)\alpha> 0$.
Thus, the last sum is bounded by $O(\s_i^{(1-\tau)\alpha - \mu(\alpha-m)})$, which implies the lemma.
\end{proof}

 \begin{lemma}\label{P:mainlemma2}
Let $S$ be a set of links that is independent in $\cG_\gamma^\delta$ and let link $i$ be such that $\s_i\le \min_{j\in S}\s_j$. Then for each $\tau < 1- \frac{1-\delta}{2}(\alpha - m + 1)/\alpha$,
 $
 I_{\tau} (S, i)= O\left(\gamma^{-\alpha/2}\right).
 $
 \end{lemma}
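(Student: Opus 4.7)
The plan is to mirror the proof of Lemma~\ref{P:mainlemma1}, the only structural change being that now the role of the short/long link is swapped: $\s_i$ is the smallest effective length rather than the largest. Partition $S$ into equilength groups $L_t=\{j\in S: 2^{t-1}\s_0\le \s_j<2^t\s_0\}$, where $\s_0=\min_{j\in S}\s_j\ge \s_i$, and let $\e_t=\min_{j\in L_t}\s_j = 2^{t-1}\s_0$. The idea is to bound $I_\tau(L_t,i)$ for each group via Lemma~\ref{P:oblcore} and sum a geometric series in $t$.

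To set up the hypotheses of Lemma~\ref{P:oblcore}, I would use the $\cG_\gamma^\delta$-independence of $S$ together with the assumption $\s_i\le \s_j$ for all $j\in S$ to deduce $d_{ij}d_{ji}>\gamma\,\s_i^{1-\delta}\s_j^{1+\delta}$, whence $d(i,j) > \sqrt{\gamma}\,\s_i^{(1-\delta)/2}\s_j^{(1+\delta)/2}$; likewise $d(j,k)>\sqrt{\gamma}\,\e_t$ for all $j,k\in L_t$. These are exactly the hypotheses of Lemma~\ref{P:oblcore} for the set $L_t$ with scaling constant $\sqrt{\gamma}$ and exponent parameter $\mu=(1-\delta)/2$ (this is the point where the asymmetry flips compared to Lemma~\ref{P:mainlemma1}, where $\mu=(1+\delta)/2$). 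Since $\s_i\le \e_t$, the factor $\min\{1,\s_i/\e_t\}^{-\mu}$ in the conclusion of Lemma~\ref{P:oblcore} equals $(\s_i/\e_t)^{-\mu}$, giving
\[
I_\tau(L_t,i) = O\!\left(\gamma^{-\alpha/2}\,(\s_i/\e_t)^{(1-\tau)\alpha-\mu(\alpha-m+1)}\right).
\]

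Now I would sum over $t\ge 1$. Writing $\s_i/\e_t=(\s_i/\s_0)\cdot 2^{-(t-1)}$ and letting $\beta=(1-\tau)\alpha - \mu(\alpha-m+1)$, the total is $O(\gamma^{-\alpha/2})\cdot(\s_i/\s_0)^\beta \sum_{t\ge 1}2^{-(t-1)\beta}$. The hypothesis $\tau<1-\frac{1-\delta}{2}(\alpha-m+1)/\alpha$ is exactly the statement $\beta>0$, which simultaneously makes the geometric series in $t$ converge and ensures $(\s_i/\s_0)^\beta\le 1$ (since $\s_i\le \s_0$). The two geometric factors thus contribute $O(1)$, so $I_\tau(S,i)=\sum_t I_\tau(L_t,i)=O(\gamma^{-\alpha/2})$, as required.

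I do not expect a major obstacle: the only subtle point is correctly identifying $\mu=(1-\delta)/2$ (rather than $(1+\delta)/2$) when invoking Lemma~\ref{P:oblcore}, which comes from the fact that in the independence condition $d_{ij}d_{ji}>\gamma\s_i\s_j(\s_{\max}/\s_{\min})^\delta$ the shorter link now plays the role of $i$ and so carries exponent $1-\delta$ on one side of the AM--GM style split into $d(i,j)$. After that the calculation is essentially identical to Lemma~\ref{P:mainlemma1} with the direction of summation reversed.
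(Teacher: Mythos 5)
Your proposal is correct and follows essentially the same route as the paper's proof: partition into equilength groups, derive $d(i,j)>\sqrt{\gamma}\,\s_i^{(1-\delta)/2}\s_j^{(1+\delta)/2}$ and $d(j,k)>\sqrt{\gamma}\,\e_t$ from $\cG_\gamma^\delta$-independence, invoke Lemma~\ref{P:oblcore} with $\mu=(1-\delta)/2$ (the $\min\{1,\cdot\}^{-\mu}$ factor giving the exponent $(1-\tau)\alpha-\mu(\alpha-m+1)$), and sum the resulting geometric series, which converges exactly because the hypothesis on $\tau$ makes that exponent positive. The only cosmetic difference is that the paper anchors the equilength groups at $\s_i$ rather than $\s_0$, which changes nothing.
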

 \begin{proof}
 Let us split $S$ into equilength subsets $L_1, L_2,\dots$, where
$
L_t=\{j\in S: 2^{t-1}\s_i \leq \s_j<2^t \s_i\}.
$
Let $\e_t=\min_{j\in L_t}\s_j$. Recall that $\e_t \ge 2^{t-1}\s_i$. From the independence condition, we have, as in the proof of Lemma~\ref{P:mainlemma1}, that $d(i,j) > \sqrt{\gamma}\s_j^{\frac{1+\delta}{2}}\s_i^{\frac{1-\delta}{2}}$ for each $j\in S$ (note the difference, though, as here $\s_i\le \s_j$) and $d(j,k)>\sqrt{\gamma}\e_t$ for all $j,k\in L_t$. We apply Lemma~\ref{P:oblcore} with $\gamma = \sqrt{\gamma}$ and $\mu=(1-\delta)/2$ to $L_t$ and link $i$ to get:
\begin{align*}
{I_{\tau}(L_t,i)} &= O\left(\gamma^{-\alpha/2}\left(\frac{\s_i}{\e_t}\right)^{(1-\tau)\alpha - \mu(\alpha-m+1)}\right)\\
&=O\left(\gamma^{-\alpha/2}\left(\frac{1}{2^{t-1}}\right)^{(1-\tau)\alpha - \mu(\alpha-m+1)}\right).
\end{align*}
Recall that we assumed $\tau < 1-\mu(\alpha- m + 1)/\alpha$, implying $\eta=(1-\tau)\alpha - \mu(\alpha-m+1) > 0$.
Thus, we have:
$
{I_{\tau}(L,i)}= \sum_{1}^{\infty}{{I_{\tau}(L_t,i)}}\leq \gamma^{-\alpha/2}\sum_{t=0}^{\infty}{\frac{1}{2^{\eta t}}} = O\left(\gamma^{-\alpha/2}\right).
$
\end{proof}

It is easily checked that when $\delta\in (\delta_0, 1)$,  the interval $(b,e)$ with $b=1-\frac{1+\delta}{2}\cdot \frac{\alpha-m}{\alpha}$ and $e=1-\frac{1-\delta}{2}\cdot\frac{\alpha - m + 1}{\alpha}$ is non-empty, and $\tau$ can be chosen to be any point in $(b,e)$. This completes the proof of Thm.~\ref{T:obliviouspowers}.

\section{Omitted Proofs: Lemmas for Thm.~\ref{T:sandwich}}

\begin{lemma}\label{P:longlink}
For each pair $i,j$ of links, $l_jd(i,j) \le 2d_{ij}d_{ji} + l_il_j$.
\end{lemma}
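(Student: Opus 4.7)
First I would set $D = d(i,j)$ and extract the two key inequalities $D \le d_{ij}$ and $D \le d_{ji}$ directly from the definition of $d(i,j)$ as a minimum of four node-to-node distances. These two facts will be the workhorses for converting $D$-factors into $d_{ij}d_{ji}$-factors on the right-hand side.

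Next, I would try to bound $l_j$ by a single triangle-inequality chain that already contains all three summands of the target. The natural candidate is the three-step walk $s_j \to r_i \to s_i \to r_j$, which gives
\[
l_j = d(s_j,r_j) \le d(s_j,r_i) + d(r_i,s_i) + d(s_i,r_j) = d_{ji} + l_i + d_{ij}.
\]
Multiplying through by $D$ splits $l_j D$ into three terms, $D\cdot d_{ji} + l_i\cdot D + D\cdot d_{ij}$. The first and third are each at most $d_{ij}d_{ji}$ by the two minimum-distance inequalities from the first step, so together they contribute exactly the $2d_{ij}d_{ji}$ on the right-hand side.

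The hard part --- really the only non-trivial point --- will be the middle term $l_i D$. To absorb it into $l_i l_j$ I need $D \le l_j$, which is not guaranteed in general. My plan is to dispatch the complementary regime by a completely separate and essentially trivial estimate: if $D > l_j$, then
\[
l_j D < D^2 \le d_{ij}\,d_{ji},
\]
using once more $D \le d_{ij}$ and $D \le d_{ji}$; this bound alone already beats the claim, with room to spare. Splitting on $D \le l_j$ versus $D > l_j$ and combining the two arguments will then give the inequality in full generality.
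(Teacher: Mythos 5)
Your proposal is correct and matches the paper's own argument: the paper likewise dispatches the case $d(i,j) > l_j$ as trivial (via $l_jd(i,j) < d(i,j)^2 \le d_{ij}d_{ji}$), and otherwise uses the same triangle-inequality chain $l_j \le d_{ji} + l_i + d_{ij}$, multiplies by $d(i,j)$, and bounds the three terms exactly as you do. No gaps; the only difference is cosmetic (you place the case split at the end rather than the start).
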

\begin{proof}
Note that if $d(i,j) > l_j$ then the claim holds trivially, so assume that $d(i,j)\le l_j$. The triangle inequality implies: $l_j\le d_{ij} + d_{ji} + l_i$. Multiplying both sides by $d(i,j)$ yields the claim: $d(i,j)l_j \le d(i,j)d_{ij} + d(i,j)d_{ji} + d(i,j)l_i\le d_{ji}d_{ij} + d_{ij}d_{ji} + l_jl_i$.
\end{proof}

\begin{lemma}\label{P:trapez}
For each triple $i,j,k$ of links with $l_i\le l_j \le l_k$,
\begin{align*}
\min\{d_{jk},d_{kj}\} &\le d(i,j) + l_i + l_j + d(i,k),\\
\max\{d_{jk},d_{kj}\} &\le d(i,j) + l_i + l_j + l_k + d(i,k).
\end{align*}
\end{lemma}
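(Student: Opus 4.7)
The plan is to route the distances $d_{jk}=d(s_j,r_k)$ and $d_{kj}=d(s_k,r_j)$ through the endpoints of the shorter link $i$, using only the triangle inequality and the fact that the two endpoints of a single link are at distance equal to its length.

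Concretely, for each pair $(i,j)$, let $q_{ij}\in\{s_i,r_i\}$ and $p_{ij}\in\{s_j,r_j\}$ be the endpoints realizing $d(i,j)$, i.e.\ $d(q_{ij},p_{ij})=d(i,j)$. Define $q_{ik}$ and $p_{ik}$ analogously. Since $q_{ij}$ and $q_{ik}$ are both endpoints of link $i$, we have $d(q_{ij},q_{ik})\le l_i$, and so the triangle inequality applied along the path $p_{ij}\to q_{ij}\to q_{ik}\to p_{ik}$ yields the key estimate
\[
d(p_{ij},p_{ik})\;\le\;d(i,j)+l_i+d(i,k).
\]

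For the $\max$-inequality this is now routine: for instance,
\[
d_{jk}=d(s_j,r_k)\;\le\;d(s_j,p_{ij})+d(p_{ij},p_{ik})+d(p_{ik},r_k)\;\le\;l_j+\bigl(d(i,j)+l_i+d(i,k)\bigr)+l_k,
\]
because $d(s_j,p_{ij})\in\{0,l_j\}$ and $d(p_{ik},r_k)\in\{0,l_k\}$. The same argument applies to $d_{kj}$, giving the claimed bound on $\max\{d_{jk},d_{kj}\}$.

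For the $\min$-inequality the idea is that by picking the appropriate one of $d_{jk},d_{kj}$ we can always avoid paying $l_k$, because either $p_{ij}$ or $p_{ik}$ already sits at a suitable endpoint. This becomes a short case analysis on the four possibilities for $(p_{ij},p_{ik})$. If $p_{ij}=s_j$ and $p_{ik}=r_k$, the estimate for $d(p_{ij},p_{ik})$ already bounds $d_{jk}$ directly (no $l_j$ or $l_k$ needed). If $p_{ij}=r_j$ and $p_{ik}=s_k$, it bounds $d_{kj}$ directly. If $p_{ij}=s_j$ and $p_{ik}=s_k$, then $d_{kj}=d(s_k,r_j)\le d(s_k,s_j)+l_j\le d(i,j)+l_i+d(i,k)+l_j$; and in the last case $p_{ij}=r_j,p_{ik}=r_k$, we similarly get $d_{jk}=d(s_j,r_k)\le l_j+d(r_j,r_k)\le l_j+d(i,j)+l_i+d(i,k)$. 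In every case, $\min\{d_{jk},d_{kj}\}\le d(i,j)+l_i+l_j+d(i,k)$, as required. The only mild subtlety is making sure the case analysis really exhausts all configurations and always charges $l_j$ rather than $l_k$; otherwise the argument is pure triangle-inequality bookkeeping.
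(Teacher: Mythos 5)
Your proof is correct and matches the paper's argument in essence: both route $d_{jk},d_{kj}$ through the endpoints of link $i$ realizing $d(i,j)$ and $d(i,k)$ and finish with the triangle inequality plus a short case analysis on which endpoints of $j$ and $k$ are involved (the paper compresses your four cases into two). Your version just spells out the bookkeeping that the paper leaves implicit.
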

\begin{proof}
The proof follows from the triangle inequality. Let $a,b$ be nodes of link $i$ (possibly coinciding) and $c,d$ be nodes of $j$ and $k$ respectively,  such that $d(i,j) = d(a,c)$ and $d(i,k)=d(b,d)$. Let $e$ ($f$) be the remaining node of link $j$ ($k$, resp.). Then the lemma follows by applying the triangle inequality to the (only) two possible cases: 1. $\{d_{jk},d_{kj}\}=\{d(c,d),d(e,f)\}$,  2. $\{d_{jk},d_{kj}\}=\{d(c,f),d(e,d)\}$.
\end{proof}

\setcounterref{theorem}{P:triangles}
\addtocounter{theorem}{-1}

\begin{lemma}
Let $i,j,k$ be such that $\s_i \le \s_j\le \s_k$ and $i$ is $f$-adjacent with both $j$ and $k$, where $f$ is a non-decreasing sublinear function. Then
\begin{align*}
d_{jk}d_{kj} < &18\s_i\s_kf(\s_k/\s_i) + 13\s_j\s_k \\
&+ 2\s_j\sqrt{\s_i\s_kf(\s_k/\s_i)} + \s_k\sqrt{\s_i\s_jf(\s_j/\s_i)}.
\end{align*}
\end{lemma}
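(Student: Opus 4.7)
The strategy is to combine the two appendix lemmas already at our disposal: Lemma~\ref{P:trapez} to expand $d_{jk}d_{kj}$ as a sum of nine non-negative terms involving $d(i,j)$, $d(i,k)$, and side-lengths, and Lemma~\ref{P:longlink} together with the $f$-adjacency hypotheses to bound each term by a suitable combination of the four target quantities $\s_i\s_kf(\s_k/\s_i)$, $\s_j\s_k$, $\s_j\sqrt{\s_i\s_kf(\s_k/\s_i)}$ and $\s_k\sqrt{\s_i\s_jf(\s_j/\s_i)}$.

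Concretely, set $X=d(i,j)$, $Y=l_i+l_j$, $Z=d(i,k)$. By Lemma~\ref{P:trapez}, $\min\{d_{jk},d_{kj}\}\le X+Y+Z$ and $\max\{d_{jk},d_{kj}\}\le X+Y+Z+l_k$, so $d_{jk}d_{kj}\le(X+Y+Z)(X+Y+Z+l_k)=X^2+Y^2+Z^2+2XY+2XZ+2YZ+l_k(X+Y+Z)$. The $f$-adjacency of $i$ with both $j$ and $k$ yields $d_{ij}d_{ji}\le\s_i\s_jf(\s_j/\s_i)$ and $d_{ik}d_{ki}\le\s_i\s_kf(\s_k/\s_i)$; combined with the trivial bound $d(a,b)\le\sqrt{d_{ab}d_{ba}}$ this gives $X\le\sqrt{\s_i\s_jf(\s_j/\s_i)}$ and $Z\le\sqrt{\s_i\s_kf(\s_k/\s_i)}$. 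Using in addition $l_a\le\s_a$, $\s_i\le\s_j\le\s_k$, and monotonicity of $f$ (so $\s_i\s_jf(\s_j/\s_i)\le\s_i\s_kf(\s_k/\s_i)$), the ``easy'' terms $X^2,Z^2,2XZ,Y^2,l_kY$ immediately reduce to multiples of $\s_i\s_kf(\s_k/\s_i)$ and $\s_j\s_k$; meanwhile $l_kX\le\s_k\sqrt{\s_i\s_jf(\s_j/\s_i)}$ produces the fourth target term with coefficient~$1$.

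The remaining cross-terms $2XY$, $2YZ$, and $l_kZ$ each contain a product of the form $l_a\cdot d(a,b)$ where both $a$ and $b$ index one of the two pairs $(i,j)$ or $(i,k)$. For such products the naive square-root bound is too weak, but Lemma~\ref{P:longlink} is tailor-made: it linearises each such product as $l_ad(a,b)\le 2d_{ab}d_{ba}+l_al_b\le 2\s_i\s_kf(\s_k/\s_i)+\s_j\s_k$, where the last step uses the relevant $f$-adjacency together with $\s_i\le\s_j$. This absorbs $l_id(i,j)$, $l_jd(i,j)$ (inside $2XY$), $l_id(i,k)$ (half of $2YZ$), and $l_kZ=l_kd(i,k)$. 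The \emph{only} product that Lemma~\ref{P:longlink} cannot directly absorb is $l_jd(i,k)$, whose indices involve all three links; this term is therefore kept in mixed form $l_jd(i,k)\le\s_j\sqrt{\s_i\s_kf(\s_k/\s_i)}$, and the leading factor~$2$ from $2YZ$ produces exactly the third target term with coefficient~$2$. Summing the contributions of all nine terms produces the stated coefficients $18,13,2,1$.

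The main obstacle is not any individual inequality but the combinatorial bookkeeping: for each product $l_a\cdot d(b,c)$ in the expansion one must decide whether to invoke Lemma~\ref{P:longlink} (linearising at the cost of an additive $\s_j\s_k$ summand) or to retain a mixed square-root form, and to match the exact statement precisely two products ($l_kd(i,j)$ and $l_jd(i,k)$) must be left in mixed form while all others are linearised. Once this choice is made consistently for all nine terms, the precise constants fall out of a routine summation.
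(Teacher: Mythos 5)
Your proposal is correct and follows essentially the same route as the paper's proof: expand $d_{jk}d_{kj}$ via Lemma~\ref{P:trapez}, bound $d(i,j)$ and $d(i,k)$ through the $f$-adjacency inequalities, linearise the length-times-distance products with Lemma~\ref{P:longlink} except for $l_jd(i,k)$ and $l_kd(i,j)$, which are kept in square-root form; your term-by-term coefficients ($18,13,2,1$) match the paper's tally exactly.
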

\begin{proof}
Since $i,j$ and $i,k$ are $f$-adjacent, we have:
	\begin{equation}
	\label{E:indineq1} d_{ij}d_{ji} \le \s_i\s_j f\left(\frac{\s_j}{\s_i}\right),\text{ and }
	 d_{ik}d_{ki} \le \s_i\s_k f\left(\frac{\s_k}{\s_i}\right),
	\end{equation}
	Using Lemma~\ref{P:trapez} we can write:
	\begin{align*}
	d_{jk}d_{kj}\le &(d(i,j)+l_i+l_j+d(i,k))^2 \\
	&+ l_k (d(i,j) + l_i+l_j + d(i,k)).
	\end{align*}
	We bound each term separately, starting from the first square. The lemma follows by simply adding up the bounds.
	\begin{itemize}
	\item{$d(i,k)^2\le \s_i\s_kf(\s_k/\s_i)$, \\
	$d(i,j)^2\le \s_i\s_jf(\s_j/\s_i)\le \s_i\s_kf(\s_k/\s_i)$ and \\
	$2d(i,j)d(i,k)\le 2\s_i\s_kf(\s_k/\s_i)$, follow from~(\ref{E:indineq1}) and monotonicity of $f$,}
	\item{$l_i^2 + l_j^2 + 2l_il_j \le 4\s_j\s_k$, from $\s_i\le \s_j\le \s_k,$}

	\item{$2(l_id(i,j) + l_jd(i,j) + l_id(i,k))$\\
	$\le 4(d_{ij}d_{ji} + d_{ij}d_{ji}+ d_{ik}d_{ki}) + 2l_i(2l_j+l_k)$\\
	$\le 12\s_i\s_kf(\s_k/\s_i) + 6\s_i\s_k$, from Lemma~\ref{P:longlink}, eqs. (\ref{E:indineq1}) and monotonicity of $f$,}
	\item{$2l_jd(i,k) \le 2\s_j\sqrt{\s_i\s_kf(\s_k/\s_i)}$, from~(\ref{E:indineq1}),}
	\item{$l_kd(i,j)\le \s_k\sqrt{\s_i\s_jf(\s_j/\s_i)}$, from~(\ref{E:indineq1}),}
	\item{$l_k(l_i+l_j)\le 2\s_k\s_j$, from $\s_i\le \s_j\le \s_k$,}
	\item{$l_kd(i,k)\le 2d_{ik}d_{ki} + l_il_k\le 2\s_i\s_kf(\s_k/\s_i) + \s_i\s_k$, from Lemma~\ref{P:longlink} and~(\ref{E:indineq1}).}
	\end{itemize}
\end{proof}

\begin{lemma}
 Let $i$ be a link and $\rho>1$.
If $E$ is a $\rho$-independent set of links where each link $j\in E$ is $f$-adjacent with $i$ and satisfies $\s_i\le \s_j \le c\s_i$ for a constant $c$, then $|E|=O(1)$.
\end{lemma}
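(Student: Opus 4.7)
The plan is a two-sided packing argument in the ambient doubling metric. First I would convert the $f$-adjacency of $i$ and $j$, combined with monotonicity of $f$ and $\s_j/\s_i\le c$, into the uniform bound $d_{ij}d_{ji}\le cf(c)\,\s_i^2$. Hence $\min(d_{ij},d_{ji})\le R_0\s_i$ with $R_0:=\sqrt{cf(c)}$, so for each $j\in E$ one endpoint of $j$ lies within distance $R_0\s_i$ of one endpoint of $i$. Partitioning $E$ into (at most) four subsets by which of the pairs $(s_i,s_j),(s_i,r_j),(r_i,s_j),(r_i,r_j)$ realises this proximity, it suffices to bound one such subset, say $E':=\{j: d(s_i,r_j)\le R_0\s_i\}$; the remaining three cases are symmetric. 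Since $l_j\le \s_j\le c\s_i$, both $s_j,r_j$ then lie in $B:=B(s_i,R\s_i)$ with $R:=R_0+c$.

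The key step is to classify each $j\in E'$ by an ordered pair of small balls holding $s_j$ and $r_j$. Setting $\epsilon:=(\sqrt{\rho}-1)/2>0$ (strictly positive because $\rho>1$), the doubling property yields a covering of $B$ by $N=O((R/\epsilon)^m)$ balls of diameter at most $\epsilon\s_i$, producing at most $N^2$ ordered (sender-ball, receiver-ball) classes.

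I would finish by showing each class contains at most one link. If $j,k$ shared a class then $d(s_j,s_k)\le\epsilon\s_i$ and $d(r_j,r_k)\le\epsilon\s_i$; two applications of the triangle inequality give
\[
d_{jk}=d(s_j,r_k)\le \min\{d(s_j,s_k)+l_k,\; l_j+d(r_j,r_k)\}\le \min(l_j,l_k)+\epsilon\s_i,
\]
and symmetrically $d_{kj}\le \min(l_j,l_k)+\epsilon\s_i$. Using $\min(l_j,l_k)\le \sqrt{l_jl_k}\le \sqrt{\s_j\s_k}$ together with $\s_i\le \sqrt{\s_j\s_k}$ (from $\s_i\le \s_j,\s_k$), one concludes $d_{jk}d_{kj}\le (1+\epsilon)^2\,\s_j\s_k<\rho\,\s_j\s_k$, contradicting $\rho$-independence. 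Therefore $|E|\le 4N^2=O(1)$.

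The principal obstacle is that the geometric length $l_j$ can be arbitrarily smaller than the effective length $\s_j$ (when $\beta_j$ is large), so a naive one-endpoint packing at scale $\s_i$ need not work: two links could share both endpoint locations up to tiny perturbations without being "short''. The workaround is to encode each link as its ordered pair of endpoints and to exploit the fact that $\rho$-independence controls the \emph{product} $d_{jk}d_{kj}$; combined with the two triangle-inequality bounds above, this makes the two-ball classification tight enough to force at most one link per class as soon as $\epsilon<\sqrt{\rho}-1$, which in turn only requires $\rho>1$ as in the hypothesis.
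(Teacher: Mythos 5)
Your proof is correct: every step (the bound $d_{ij}d_{ji}\le cf(c)\s_i^2$ from $f$-adjacency and monotonicity, the confinement of both endpoints of each $j$ to a ball of radius $O(\s_i)$ using $l_j\le\s_j\le c\s_i$, and the contradiction with $\rho$-independence via $d_{jk}\le \min(l_j,l_k)+\epsilon\s_i$, $d_{kj}\le \min(l_j,l_k)+\epsilon\s_i$ when two links fall in the same ordered pair of $\epsilon\s_i$-balls) checks out, and the constants depend only on $c$, $f$, $\rho$ and the doubling dimension, as required. The paper's proof uses the same two ingredients but organizes the counting differently: it splits $E$ by which of $d_{ij},d_{ji}$ is smaller, observes that (say) the senders $s_j$ all lie within $O(\s_i)$ of $r_i$, and then shows directly that these senders are pairwise $(\sqrt{\rho}-1)\s_i$-separated --- if $d(s_j,s_k)\le(\sqrt{\rho}-1)\s_j$ then $d_{kj}\le d(s_k,s_j)+l_j\le\sqrt{\rho}\s_j$ and $d_{jk}\le d(s_j,s_k)+l_k\le\sqrt{\rho}\s_k$, contradicting $\rho$-independence --- so a single packing bound on one endpoint per link already gives $|E|=O(1)$. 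This also shows that the ``principal obstacle'' you describe is not actually an obstacle: precisely because $l_j\le\s_j$ and $l_k\le\s_k$, closeness of the two senders (or the two receivers) alone routes the triangle inequality through the shared location and bounds \emph{both} $d_{jk}$ and $d_{kj}$, so the naive one-endpoint packing at scale $\s_i$ does work; your ordered-pair-of-balls classification is a harmless but unnecessary strengthening, costing $O(N^2)$ classes instead of $O(N)$ packed points, which is still a constant.
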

\begin{proof}
	Take subsets $A=\{j\in E : d_{ji}=\min\{d_{ij},d_{ji}\}\}$ and $B=E\setminus A$. We bound the size of $A$, the case of $B$ being symmetric. Consider the distances of links in $A$ to the node $r_i$. From $f$-adjacency assumption and the definition of $A$, we have that for each $j\in A$, $d(s_j,r_i)\le \sqrt{d_{ij}d_{ji}} \le \sqrt{\s_i\s_jf(\frac{\s_j}{\s_i})}\le \sqrt{hc}\cdot \s_i$, where constant $h$ is such that $f(x)\le hx$. On the other hand, we
	claim that $d(s_j,s_k) > (\sqrt{\rho}-1)\s_j \ge (\sqrt{\rho}-1)\s_i$ for all links $j,k\in S$ with $\s_j\le \s_k$. Indeed, if we assume the contrary, then the triangle inequality gives: $d_{kj}\le d(s_k,s_j)+l_j\le (\sqrt{\rho}-1)\s_j + \s_j=\sqrt{\rho}\s_j$ and $d_{jk}\le d(s_j,s_k) + l_k \le (\sqrt{\rho}-1)\s_j + \s_k \le \sqrt{\rho}\s_k$, contradicting $\rho$-independence.

	Thus, the distance from every point in $A$ to $r_i$ is at most $\sqrt{hc}\s_i$ and the mutual distances of points in $A$ are at least $(\sqrt{\rho}-1)\s_i$, so the doubling property of the metric space gives:
 $|A|=O((\sqrt{hc/\rho})^m)= O(1)$, which completes the proof.
\end{proof}

\section{The Proof of Theorem~\ref{T:inductiveindep}}

\setcounterref{theorem}{T:inductiveindep}
\addtocounter{theorem}{-1}

\begin{theorem}
Let $f$ be a non-decreasing strongly sub-linear function with $f(x)\ge 40$ for all $x\ge 1$.  For every set $L$, the graph $\cG_f(L)$  is constant inductive independent.
\end{theorem}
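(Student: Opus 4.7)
The plan is to adapt the argument of Theorem~\ref{T:sandwich}, taking the inductive-independence ordering to be non-decreasing effective length. Fix a link $i$ and let $T$ be an $f$-independent set whose members are all $f$-adjacent to $i$ and satisfy $\s_j \ge \s_i$; I show $|T| = O(1)$. As in Theorem~\ref{T:sandwich}, I split $T = A \cup B$ with $A = \{j \in T : \s_j \le x_0\, \s_i\}$. Since $f \ge 40$, $f$-independence of $T$ implies $40$-independence, so Lemma~\ref{P:simpleset} applied with $\rho = 40$ and $c = x_0$ bounds $|A| = O(1)$.

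The main work is bounding $|B|$. For any $j, k \in B$ with $\s_j \le \s_k$, I apply Lemma~\ref{P:triangles} to the triple $(i, j, k)$ and divide through by $\s_j\s_k$. Writing $g(x) := f(x)/x$, the two square-root terms become $2\sqrt{g(\s_k/\s_i)} + \sqrt{g(\s_j/\s_i)}$; since $j, k \in B$ means $\s_j/\s_i, \s_k/\s_i \ge x_0$ and hence $g(\cdot) < 1$ there, these two terms together contribute at most $3$. Lemma~\ref{P:triangles} then reduces to
\[
\frac{d_{jk} d_{kj}}{\s_j \s_k} < \frac{18\, \s_i\, f(\s_k/\s_i)}{\s_j} + 16.
\]
Combining this with the $f$-independence inequality $d_{jk} d_{kj} > \s_j \s_k f(\s_k/\s_j)$ and using $f(\s_k/\s_j) \ge 40$ to absorb the additive $16$ into a constant factor (namely $f(\s_k/\s_j) - 16 \ge (3/5)\, f(\s_k/\s_j)$) yields $\s_j\, f(\s_k/\s_j) \le 30\, \s_i\, f(\s_k/\s_i)$; dividing by $\s_k$ rewrites this as
\[
g(\s_k/\s_j) \le 30\, g(\s_k/\s_i).
\]

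Strong sub-linearity of $f$ then finishes the bound. Instantiating its definition with $c = 31$ yields a constant $c'$ such that $g(x) \le g(y)/31$ whenever $x \ge c'\, y$ and $x, y \ge 1$. Taking $x = \s_k/\s_i$ and $y = \s_k/\s_j$, the premise $x \ge c'\, y$ reads $\s_j/\s_i \ge c'$; were this to hold, we would get $g(\s_k/\s_i) \le g(\s_k/\s_j)/31$, contradicting the previous display (note $g > 0$ since $f \ge 40$). Hence every $j \in B$ that admits some $k \in B$ with $\s_k \ge \s_j$ satisfies $\s_j < c'\, \s_i$; this covers all of $B$ except possibly one $\s$-maximal element. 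A second application of Lemma~\ref{P:simpleset}, now with $\rho = 40$ and $c = c'$, to this bounded-diversity remainder of $B$ gives $|B| = O(1)$, and therefore $|T| = |A| + |B| = O(1)$ as claimed.

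I expect the main obstacle to lie in the careful arithmetic that bridges Lemma~\ref{P:triangles} and strong sub-linearity: the four terms of Lemma~\ref{P:triangles} must be tamed using $f \ge 40$ in exactly the right way so that the single clean inequality $g(\s_k/\s_j) \le 30\, g(\s_k/\s_i)$ emerges, after which the definition of strong sub-linearity mechanically forces $\s_j/\s_i$ to stay below a constant and Lemma~\ref{P:simpleset} closes out the count.
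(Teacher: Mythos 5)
Your overall route is the same as the paper's: order by non-decreasing effective length, use Lemma~\ref{P:triangles} together with $f$-independence and strong sub-linearity to force $\s_j=O(\s_i)$ for all but at most one link of $T$, and finish with Lemma~\ref{P:simpleset} (the paper applies it once, you apply it twice, which is harmless). However, there is one genuine gap: the claim that $j,k\in B$, i.e.\ $\s_j/\s_i,\s_k/\s_i\ge x_0$, implies $g(\s_j/\s_i),g(\s_k/\s_i)<1$, where $g(x)=f(x)/x$. The quantity $x_0$ is defined as $\inf\{x\ge 1: f(x)<x\}+1$, and for a general non-decreasing strongly sub-linear $f$ with $f\ge 40$ nothing prevents $f(x)/x$ from climbing back above $1$ beyond $x_0$: for instance, a function equal to $40$ on $[1,60]$, jumping to $80$ on $(60,10^6]$, and growing like $\Theta(\sqrt{x})$ afterwards is non-decreasing, strongly sub-linear and $\ge 40$, yet has $x_0=41$ while $f(70)/70>1$. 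So the step bounding the two square-root terms of Lemma~\ref{P:triangles} by $3$ after dividing by $\s_j\s_k$ is unjustified as written; without it your additive constant is only $13+3\sqrt{c_f}$ (with $f(x)\le c_f x$), which the hypothesis $f\ge 40$ may no longer absorb, and the clean inequality $g(\s_k/\s_j)\le 30\,g(\s_k/\s_i)$ does not follow. (Your claim is true for the power functions $f(x)=\gamma x^\delta$ used elsewhere, but the theorem is stated for arbitrary strongly sub-linear $f$.)

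The repair is exactly what the paper does: do not split at $x_0$, but at a constant $c_1=c_1(f)$ chosen so that $f(x)/x$ is below a fixed small constant for all $x\ge c_1$; such a $c_1$ exists because strong sub-linearity gives $f(x)=o(x)$. The paper restricts attention to pairs with $\s_j>c_1\s_i$, for which the extra terms of Lemma~\ref{P:triangles} contribute less than $7$ per $\s_j\s_k$, and then $f\ge 40$ absorbs the additive $13+7$ just as in your computation; links with $\s_i\le\s_j\le c_1\s_i$ are handled by Lemma~\ref{P:simpleset} exactly as your set $A$. With $x_0$ replaced by such a $c_1$, the rest of your argument (the absorption using $f\ge 40$, the application of strong sub-linearity with $c=31$ to force $\s_j/\s_i\le c'$ for all but one link, and the two invocations of Lemma~\ref{P:simpleset}) is correct and matches the paper's proof.
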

\begin{proof}
We will show that $\cG_f(L)$ is constant inductive independent with respect to the ordering of links in a non-decreasing order by effective length (ties broken arbitrarily).
Fix  a link $i\in L$ and let $T$ be an $f$-independent set of links that are $f$-adjacent with $i$ and $\s_j\ge\s_i$ for all $j\in T$. It is sufficient to show that $|T|= O(1)$.

We will show that all links in $T$ except perhaps only one, are such that $\s_j \le c\s_i$ for some constant $c$: the proof then is completed by applying Lemma~\ref{P:simpleset}. The constant $c$ is determined by the properties of function $f$. Consider arbitrary two links $i,j\in T$. By Lemma~\ref{P:triangles}, and the assumption that $f(x)=o(x)$, we have that
$
d_{jk}d_{kj} \le 18\s_i\s_kf(\s_k/\s_i) + \s_j\s_k(13 + o(\s_j/\s_i)).
$
If $\s_j > c_1\s_i$ with large enough constant $c_1$ (depending on $f$), then the $o$-term is smaller than, say, $7$.  On the other hand, $f$-independence of $j$ and $k$ implies: $d_{jk}d_{kj} > \s_j\s_kf(\s_k/\s_j)$. Putting these together, we  obtain that for all links $j,k$ with $c_1\s_i < \s_j\le \s_k$,
$
\s_j\s_k f(\s_k/\s_j) \le 18\s_i\s_kf(\s_k/\s_i) + 20\s_j\s_k.
$
Since $f(\s_k/\s_j) \ge 40$, a simple manipulation gives us: $\frac{1}{2}\s_j\s_k f(\s_k/\s_j) \le 18\s_i\s_kf(\s_k/\s_i)$ or
$
f(y)/y \le 36\cdot f(x)/x,
$
where we denote $x=\s_k/\s_i$ and $y=\s_k/\s_j$. Strong sub-linearity of $f$ implies that there is a constant $c_2$ such that if $x > c_2 y$, then $36f(x)/x < f(y)/y$, so in virtue of the inequality above,  we must have that $\s_j/\s_i = x/y \le c_2$. Thus, we have proved that for all links $j\in T$ except maybe one, $\s_j\le \max\{c_1,c_2\} \s_i$, which completes the proof.
\end{proof}

\subsection{A Conflict Graph Refinement for MC-MA}\label{A:mcmr}

In order to find a conflict graph refinement for the MC-MA setting, it is sufficient to extend the existing refinement for the single channel case to the virtual links.

Let $L$ denote the set of virtual links and $L_o$ denote the corresponding originals. Let $\cG(L_o)$ be the conflict graph refinement of $L_o$, when assuming a single channel. We define the refinement graph $\cGm(L)$: the set of vertices of $\cGm(L)$ is the set of virtual links, and two virtual links are adjacent if at least one of the following holds: 1. they share an antenna, 2. they share a channel \emph{and} their originals are adjacent in $\cG(L_o)$, i.e., in the single channel setting.

Clearly, each independent set in $\cGm(L)$ is feasible. Since each feasible set in MC-MA is a collection of feasible sets of virtual links (in the sense of ordinary physical model) corresponding to different channels, the $O(\log\log\Ds)$-tightness of $\cG(L_o)$ easily implies $O(\log\log\Ds)$-tightness of $\cGm(L)$.
In order to prove inductive independence, it is enough to note that each virtual link has the same neighborhood in $\cGm(L)$ as in $\cG(L_o)$ (when translated back to ``originals''), except for two additional sets of links: $S$ -- the ones using the same antenna as the  sender  and $R$ -- the ones using the same antenna as the receiver. Note that the sets $S$ and $R$ are cliques in $\cGm(L)$, as they all share an antenna. Thus, each independent set in the neighborhood of a virtual link consists of unique replicas of an independent set in $\cG(L_o)$, plus at most two more links, one from $S$ and another from $R$. This readily implies that if $\cG(L_o)$ is $k$-inductive independent, then $\cGm(L)$ is $k+2$-inductive independent.

Thus, $\cGm$ is an $O(\log\log\Ds)$-tight refinement for the MC-MA physical model.

\end{document}